\newtheorem{theorem}{Theorem}
\newtheorem{lemma}{Lemma}
\newtheorem{proposition}{Proposition}
\newtheorem{corollary}{Corollary}[theorem]
\def\BState{\State\hskip-\ALG@thistlm}
\def\randE{\ensuremath\mathcal{E}}
\begin{document}
\setlength{\belowcaptionskip}{-10pt}
\setlength{\abovedisplayskip}{2pt}
\setlength{\belowdisplayskip}{2pt}
\setlength{\parskip}{0pt}
%
\title{Mobility-Aware Electric Vehicle Fast Charging Load Models with Geographical Price Variations}

\author{\IEEEauthorblockN{Ahmadreza Moradipari\IEEEauthorrefmark{1},
Nathaniel Tucker\IEEEauthorrefmark{1}, and
Mahnoosh Alizadeh\\\vspace{0.0cm}}

\IEEEauthorblockA{\IEEEauthorrefmark{1}Authors have equal contribution}
}


%


\pagenumbering{gobble}

\maketitle


\begin{abstract}
\textcolor{black}{
We study the traffic patterns as well as the charging patterns of a population of cost-minimizing EV owners traveling and charging within a transportation network equipped with fast charging stations (FCSs). Specifically, we study how the charging network operator (CNO) can influence where EV users charge in order to optimize the utilization of fast charging stations. 
These charging decisions of private EV owners affect aggregate congestion at stations (i.e., waiting time) as well as the aggregate EV charging load across the network. In this work, we capture the resulting equilibrium wait times and electricity load  through a  so-called \textit{traffic and charge assignment problem} (TCAP) in a fast charging station network. Our  formulation allows us to: 1) Study the expected station wait times as well as the probability distribution of aggregate charging load  of EVs in a FCS network in a mobility-aware fashion (an aspect unique to our work), while accounting for heterogeneities in users' travel patterns, energy demands, and geographically variant electricity prices. 2) Analytically characterize the special threshold-based structure that determines how EV owners choose where to charge their vehicle at equilibrium, in response to the FCS's charging price structure, users' energy demands, and users' mobility patterns. 
3) Provide a convex optimization problem formulation to identify the network's unique equilibrium. Furthermore, we illustrate how to induce a socially optimal charging behavior   by deriving the socially optimal plug-in fees and electricity prices at the charging stations.
}

\end{abstract}


%
\IEEEpeerreviewmaketitle

\makeatletter
\def\blfootnote{\xdef\@thefnmark{}\@footnotetext}
\makeatother

\blfootnote{
\indent
This work was supported in part by NSF grant \#1847096 and in part by UCSB's IEE Excellence in Research Graduate Fellowship. The authors are with the Department of Electrical and Computer Engineering, 
University of California, Santa Barbara,
California, 93106, USA (email: nathaniel\_tucker@ucsb.edu).}


\section{Introduction}\label{section: Intro}
As of January 2020, the United States has over one million electric vehicles (EVs) on the road and nearly 25,000 EV charging stations \cite{Intro_one_million, Intro_DOE}. It is important to note that both the EVs and stations are not evenly distributed throughout the country. There are certain regions where EV adoption rates are much higher than average yet the number of charging stations is lacking. 
Additionally, the recharging process of an EV is significantly slower than the refueling process for an internal combustion engine vehicle (ICEV), meaning that EVs occupy chargers for long periods of time. As a result of the limited infrastructure and long charging times, EV owners in populated urban areas are experiencing high levels of congestion at public charging stations during peak usage hours\cite{intro_tesla, Intro_tesla2}. Furthermore,  temporal load shifting is already not possible at fast charging stations due to the requirement to recharge arriving EVs immediately. Hence, unless users are guided to charge at the {\it right stations}, one cannot  control the effect of fast charging station loads on the power grid.

To address the aforementioned issue of charging station congestion, a charging network operator (CNO) has two main options: 1) expand the current charging station network with more chargers and locations, or 2) optimize the usage of the current charging network infrastructure via heterogeneous charging fees that guide EV users away from congested stations or stations with high electricity prices. Regarding the first option, much work has been done for determining the optimal capacities and locations for new charging stations given population data and mobility patterns \cite{8950037,8999636,8984288,7439861,8932405,lam2014electric,luo2015placement,zhang2016pev,abdalrahman2019pev,bayram2015capacity,zhang2019expanding}. However, enlarging a charging station network is an expensive operation and the charging stations added to satisfy demand during peak hours might be unused during off-peak hours. As such, before investing in additional charging infrastructure, CNOs should consider utilizing pricing strategies to optimize the usage of their current charging network. 

In order to optimize the utilization of fast charging stations, the CNO must account for the numerous factors that influence where EV owners decide to charge their vehicles including mobility patterns, plug-in fees at stations, charging prices, and station wait times \cite{8824110,moradipari2018pricing}. For example, each EV owner would face a  trade-off between charging stations with potentially higher charging costs but less waiting time, 
versus those stations with lower charging costs but longer waiting time. 
Clearly, these charging decisions of private EV owners affect aggregate congestion at stations (i.e., waiting time) as well as the aggregate EV charging load across the network. In this work, we capture the resulting equilibrium wait times and electricity load  through a  so-called \textit{traffic and charge assignment problem} (TCAP) in a fast charging station (FCS) network. \textcolor{black}{The main contributions of our modeling framework are the following:
\begin{itemize}
    \item We study the expected station wait times as well as the probability distribution of aggregate charging load  of EVs in a FCS network in a mobility-aware fashion (an aspect unique to our work), while accounting for heterogeneities in users' travel patterns, energy demands, and geographically variant electricity prices. 
    \item We analytically characterize the special threshold-based structure that determines how EV owners choose where to charge their vehicle at equilibrium, in response to the FCS's charging price structure, users' energy demands, and users' mobility patterns. Unlike previous work, we do not need to discretize the charge requests of EVs in order to determine the equilibrium charging behavior. 
    \item We provide a convex optimization problem formulation to identify the network's unique equilibrium. Furthermore, we illustrate how to induce a socially optimal charging behavior   by deriving the socially optimal plug-in fees and electricity prices at the charging stations.
\end{itemize}
}
\textcolor{black}{
The work presented in this manuscript advances the state-of-the-art in traffic and charging network analysis in several main aspects. First, our work directly analyzes the effects of users' charge demands and heterogeneous charging prices on both the traffic flow and the charging load distribution at equilibrium in a charging network. Second, our framework does not make use of energy requests that come from a discrete set, rather, we are able to account for EV users with realistic energy requests from continuous distributions without sacrificing tractability. Third, we advance the state-of-the-art in modeling charging trips by alleviating the need to use excessive number of imaginary arcs/nodes in the network graph to account for different charge durations at a station, thus maintaining tractability for larger networks.
}

\textcolor{black}{
We would also like to briefly comment on the practicality of our framework. Our framework provides insight into the emergent equilibrium of traffic and charging networks and gives information about the aggregate behavior of an EV population. This aggregate behavior model can then be used for designing plug-in fees to mitigate congestion in real networks, to forecast charging load at each charging station, or to plan where to expand a charging network. Additionally, our framework complements recent literature regarding the coupling between EV transportation networks and the power grid; therefore, our work could be used to forecast loads directly for the power grid or aid in planning the addition of new power lines or storage capacity.
}

\textit{Related Works:} Several works have emerged focusing on minimizing users' waiting times at charging station queues \cite{qin2011charging, xu2017dynamic}, incentivizing users away from congested stations \cite{bayram2013decentralized, bi2019distributing}, and limiting charging durations \cite{fan2015operation} to ensure other EVs have a chance to charge. Many works focusing on charging station utilization make use of pricing mechanisms to influence the users within the system for various objectives such as profit maximization of individual charging stations \cite{yuan2015competitive}, minimizing fleet costs\cite{zhou2015optimal,moradipari2019mobility}, or exploiting energy storage capabilities \cite{luo2017stochastic}. However, none of the aforementioned papers analyze the effects of users' charge demands and heterogeneous charging prices on both the traffic flow and charging load distribution at equilibrium in a charging network.

In addition to managing congestion at charging stations, pricing strategies can also help manage the effects of EV charging load on the power grid. For example, papers \cite{zhang2015impact, donadee2014stochastic, lam2015capacity, vaya2015self} focus on using charging stations to provide ancillary support to benefit the local distribution grid. Furthermore, due to the inherent interdependencies of charging stations and the power grid, many researchers have studied the coupled infrastructure systems to find operational strategies that benefit both systems \cite{8932405,7121000,8635950,tucker2019online,alizadeh2016optimal, wei2017network, he2016sustainability, wei2016optimal}.  In studying how users traveling on a transportation network make charging decisions, previous works have made approximations for model tractability. For example, the authors of \cite{alizadeh2016optimal, wei2017network} assume that energy requests come from discrete sets and make use of virtual charging links in their network graphs to account for a limited number of charging decisions. Similarly, the authors of \cite{he2016sustainability, wei2016optimal} assume that all users on the same route will receive the same amount of energy. Our work addresses these challenges by accounting for EV users with energy requests from continuous distributions without sacrificing tractability. Our improved TCAP   supplements the coupled traffic/power system literature and can be directly integrated into frameworks such as \cite{alizadeh2016optimal}.  

The papers most similar to ours study traffic and charging patterns within networks of charging stations at equilibrium; however, the objectives, the models, and the problem settings are different. For example, \cite{jiang2012path} and \cite{wang2016path} focus on traffic assignment but do not focus on charging or the effects of heterogeneous prices, \cite{jiang2014computing} does not focus on pricing nor the distribution of charging load throughout the network, \cite{he2014network} assumes homogeneous charging prices across all stations and is focused on range concerns, and \cite{manshadi2017wireless} assumes there are wireless chargers able to charge the EVs while driving.


The remainder of the manuscript is organized as follows: Section \ref{section: system model} presents the system model and FCS network characteristics, Section \ref{sec: individual user problem} presents the individual EV owner's decision problem, Section \ref{section: network equilibrium model} presents the energy threshold-based structure of the emergent equilibrium in the FCS network, Section \ref{sec.so} discusses the plug-in fees and charging costs that induce the socially optimal (SO) equilibrium, and Section \ref{sec.numerical} presents a numerical example.

\section{System Model}
\label{section: system model}

\subsection{Transportation Network and EV User Characteristics}

In the following, we model the transportation network as a directed graph $\mathcal{G}=(\mathcal{V},\mathcal{A})$, where $\mathcal{V}$ is the set of nodes (locations) and $\mathcal{A}$ is the set of arcs (roads). Furthermore, we assume that there are public fast charging stations located at a subset of the nodes $\mathcal{J}\subset\mathcal{V}$. 

Now, we need to model the effects on station wait times and charging load due to selfish EV owners who want to charge their vehicles. We assume that each EV user wants to travel from an origin node to a destination node that forms a so-called origin-destination (o-d) pair $(o,d)\in\mathcal{O}$. Furthermore, each EV user would like to charge their EV $\epsilon$ units of energy at some fast charging station that they visit during their trip. We wish to emphasize that we are only concerned with \textit{charging trips} in this work, i.e., we do not need to consider trips during which the EV owner does not want to charge their vehicle. 

\textbf{Modeling Assumptions:} Let us now state the assumptions we make.  First, we make two important but standard modeling assumptions: 1) We assume that each EV user contributes an infinitesimal amount of traffic and charge demand compared to the entire EV population as a whole. This allows us to examine the FCS network from a macroscopic level via a network flow model. 2) We assume that the population of EV owners is homogeneous in their value of time (e.g., $\$/\text{minute}$) and denote the inverse of this ratio as $\alpha$ (e.g., $\text{minutes}/\$$).  
We note that these assumptions are common in the study of transportation and charging station networks \cite{alizadeh2016optimal, wei2017network, he2016sustainability, wei2016optimal}.

Regarding the energy request $\epsilon$ of each EV user, we make the following two assumptions: 1) We assume that each EV user has an inelastic energy requirement, meaning that their energy demand is independent of charging station prices. 2) We assume that each EV user's energy requirement is independent of the charging station $j\in\mathcal{J}$ that the user chooses.    We note that  these assumptions are not critical and can be removed at the cost of modeling simplicity. The elasticity of energy demand as a function of charging station prices can be included in our model through the use of elastic demand functions. The second assumption may also be addressed by adding distance dependent deterministic shifts to the energy requests of individual users depending on the charging station they visit. However, we do not believe this is a good modeling choice for the purposes of price design for congestion management and characterization of the distribution of charge requests within public charging station networks (which are commonly concerned with aggregate statistics). Consider the following example regarding an EV owner's average morning trip to work and their decision of which fast charging station to stop at.  If the user travels farther and expends more energy during the charging trip so he can charge his EV at a station closer to his workplace, he will then have more energy left in his battery at the end of the trip. As such, we believe that, at the macro scale, accounting for the energy expended en route during the charging trip is an unnecessary distortion rather than an improvement when considering a fast charging network. However, this can be easily modeled if needed, particularly for long intercity trips. \textcolor{black}{Furthermore, we note that in this work for the purpose of brevity, we only consider one class of EV users. However, our model can readily handle multiple classes of EV users, and this can help us account for range limitations. Namely, each class of EVs would correspond to an initial energy amount and the EVs in each group would be limited to certain stations near their origin. While other groups of EVs with larger initial energy amounts would not be limited in which charging stations they can reach.
}

\subsection{Charging Network Operator (CNO)}
\label{subsection: CNO}

The Charging Network Operator (CNO) controls the operations of the entire charging network. Namely, the CNO sets the plug-in fees and charging prices in order to manage congestion and electricity usage at each FCS. 

Although the characteristics of each individual EV user are private and unknown by the CNO,   the CNO has some degree of knowledge of the mobility patterns and energy requests of the EV population as a whole based on statistics collected in the transportation network. Specifically, to facilitate a congestion mitigation strategy,  the CNO has access to the following information:
\begin{itemize}
    \item Average mobility patterns of the EV population: Let us denote the mean  rate of charging trips between o-d pair $(o,d)$ as $q_{od}$, which is known by the CNO, and is considered inelastic. The results presented in this paper are applicable to 1) the case where the trips occur randomly according to a Poisson process with mean $q_{od}$ and 2) the case where the arrival process is deterministic and constant with rate $q_{od}$. This is because we assume that even if the travel demand is a random process with a constant mean, the users do not observe the current realization of others' charging trips or energy demands; hence, they solely make their path decisions based on expected wait times without arrival information (which would be constant at equilibrium). Considering random arrivals solely helps us with connecting our expected congestion costs to stochastic queuing models.
    \item Distribution of energy requests: We assume that the energy request of each EV user with the o-d pair $(o,d)$ is an i.i.d. random variable $\mathcal{E}$ that is distributed according to a general distribution $g^{(o,d)}_{\mathcal{E}}(\epsilon)$. Furthermore, we denote the minimum and maximum possible energy requests as $\epsilon_{\text{min}}$ and $\epsilon_{\text{max}}$. Without loss of generality, and purely for brevity of notation, we assume that the energy request distributions do not vary between different o-d pairs and the common distribution is denoted as $g^{}_{\mathcal{E}}(\epsilon)$ which is bounded below away from zero in its compact support $[\epsilon_{\text{min}},\epsilon_{\text{max}}]$.
\end{itemize}

\subsection{Charging stations}
As mentioned previously, public charging stations are located at a limited number of nodes $j\in\mathcal{J}$ across the network. Without loss of generality, we assume that all stations can provide similar charging rates of $1/\gamma$ units of energy per time epoch. However, we note that the results in this work can be readily extended to handle heterogeneous charging rates across the  stations in the network\footnote{Mathematically, the effect of different charging rates on the user equilibrium is captured by adding the term $\alpha^{-1}\gamma_j$ to each station's price of electricity. Hence, the additional time spent charging an EV acts as a penalty for stations with slower charge rates, which increases in effect for EVs with larger energy demand.}. 

While traveling and charging throughout the network, EV users can experience congestion at the charging stations. This is primarily due to the fact that there are a limited number of EV chargers at each station which limits the number of EVs that can be simultaneously plugged in. This is worsened when EV owners partake in various activities away from their charging EV which can result in longer than necessary sojourn times that prohibit other EVs from taking their place. In the following, we denote the average number of EV charging requests arriving at station $j$ as $\lambda_j$. Furthermore, we assume that an EV user experiences an average wait time $T_j(\lambda_j)$ associated with finding an available EV charger at station $j$. We assume that $T_j(\lambda_j)$ is a strictly increasing, convex and continuously differentiable function of $\lambda_j$. We provide a discussion on the use of various queuing based models for $T_j(\lambda_j)$ in Section \ref{subsection: queues} in the Appendix. Furthermore, we assume that EV users have access to information about the average congestion levels at each charging station when making their traveling decisions\footnote{We note that this is a less restrictive version of the assumption made in papers that consider user equilibria with stochastic travel times \cite{WATLING20061539}, where it is assumed that customers have knowledge of the joint densities of road travel times.}. For example, this could be achieved via sample averaging that is reported to the CNO's mobile application which the EV users have access to. Making use of the information regarding charging station congestion levels, each EV user's average traveling cost becomes a deterministic function of their energy demand. Specifically, each EV user chooses their minimum average cost path that allows them to travel to their destination and charge en route, which we discuss further in Section \ref{sec: individual user problem}. 

\section{The Individual EV User's Decision Problem}
\label{sec: individual user problem}

Each EV user  is looking for a path from their origin to destination that also stops at a station and allows them to charge their EV en route. Specifically, for an EV user with o-d pair $(o,d)$, an acyclic path $p\in\mathcal{P}$ on the transportation network $\mathcal{G}$ is a feasible path if it connects nodes $o$ and $d$ via a number of consecutive road arcs and enters \textit{a single charging station} $j\in\mathcal{J}$. We define $\mathcal{P}$ as the set of all paths for all o-d pairs and $\mathcal{P}_{od}\subset\mathcal{P}$ as the set of feasible paths for o-d pair $(o,d)$, with $|\mathcal{P}_{od}|$ denoted as $K_{od}$.

\subsection{EV User Costs}

Each EV user's objective is to select the path $p\in\mathcal{P}_{od}$ that minimizes their average trip cost, which consists of travel time costs, waiting costs due to congestion at charging stations, and monetary costs due to charging fees collected by the CNO. We discuss these cost components next.

1) Latency Costs on Road Arcs: To preserve simplicity of exposition, we assume that each road arc $a\in\mathcal{A}$ has constant latency $t_a >0$ that is independent of the mean flow rate $x_a$ of users on that road arc. This is a reasonable assumption provided that the flow of EVs is small relative to the flow of other vehicles on the network, which is indeed true in transportation systems at the time of this writing \cite{Low_penetration}. We note that the results of this paper can be readily extended to handle separable cost functions for road arcs.

2) Latency Costs at Charging Stations: Each user's average sojourn time at charging station $j$ is dependent on the charging station's congestion as well as the user's energy request $\epsilon$. Specifically, a user's average sojourn time at charging station $j$ can be calculated as 
\begin{align}
    \text{average sojourn time} = \gamma \epsilon +T_{j}( \lambda_{j})
\end{align}
where the first term corresponds to the time spent charging and the second term corresponds to the time spent searching for an available charger due to congestion.

Combining the latency costs on road arcs and at charging stations, we can calculate the total latency cost of any path $p_i\in\mathcal{P}$. Specifically, given the congestion at all charging stations within the network, $\lambda=[\lambda_{j}]_{j\in\mathcal{J}}$, the latency on path $p_i$ can be calculated as follows:
\begin{equation}\label{eq:total latency}
l_{p_i}(\lambda) = \sum_{a \in A}  \delta_{ai}t_{a} + \sum_{j \in J}  \delta_{ji}(\gamma \epsilon +T_{j}( \lambda_{j})),
\end{equation}
where $\delta_{ai}=1$ if path $p_i$ includes road arc $a$ and $\delta_{ai}=0$ otherwise. Similarly, $\delta_{ji}=1$ if path $p_i$ enters charging station $j$ and $\delta_{ji}=0$ otherwise.

3) Cost of Charging an EV: Each time an EV user plugs in their EV at charging station $j$, we assume they pay a one time plug-in fee $\tau_j$. Furthermore, they must pay an electricity price $v_j$ for each unit of energy that is transferred to their EV. As such, the payment for an EV user with energy demand $\epsilon$ at charging station $j$ is calculated as
\begin{equation}\label{payment1}
\text{payment at charging station $j$} = \upsilon_{j} \epsilon + \tau_{j}.
\end{equation}
In subsequent sections, we provide more specifics on the objectives of the CNO and to select $\tau_j$ and $v_j$.

\subsection{The EV Users' Path Selection Problem}
Regarding the behavior of the EV user population, our key assumption is that each individual $i$ selects the feasible path $p_i\in\mathcal{P}_{od}$ with the smallest average total cost. Specifically, an EV user with o-d pair $(o,d)$ and energy demand $\epsilon$ would solve for the path $p_i\in\mathcal{P}_{od}$ with the smallest average total cost:
\begin{align} \label{eq:costfunction2}
C_{od}^{p_i}(\epsilon) = & \left[\sum_{a \in A}  \delta_{ai}t_{a} + \sum_{j \in J}  \delta_{ji}(\gamma \epsilon +T_{j}(\lambda_{j}))\right] \nonumber\\
& + \alpha \sum_{j \in J}\delta_{ji}(\tau_{j} + \upsilon_{j} \epsilon).
\end{align}
The last term in \eqref{eq:costfunction2} incorporates the monetary costs due to charging via the population's time-value-of-money parameter $\alpha$. We note that this path choice model is the underlying framework for the aggregate analysis of how individual users' decisions affect the overall station queues and charging load in a Wardropian network equilibrium \cite{correa2011wardrop}.

\section{Network Equilibrium Model}
\label{section: network equilibrium model}
In the following, we discuss the equilibrium that emerges in the charging station network with selfish EV users solving for their optimal charging locations. 
In the following, we will show that the dependence of each individual's cost on their energy demand $\epsilon$ ultimately determines the structure of the equilibrium. Specifically, at equilibrium, a user with a large energy request would select a path with higher total latency to exploit potentially cheaper charging prices. On the other hand, a user with a small energy request would select a path with less total latency but potentially higher charging prices. Due to this behavior, the feasible paths connecting an o-d pair receive unequal portions of the arriving EV users dependent on their energy requests. Moreover, the distribution of energy requests at a given charging station at equilibrium will not match the the EV population's energy demand distribution $g_{\mathcal{E}}(\epsilon)$. Instead, we will see that for each o-d pair, the span of EV users' energy request values $[\epsilon_{\text{min}},\epsilon_{\text{max}}]$ can be partitioned into intervals such that all the EV users with energy demands within a unique interval will be assigned to the same unique path. We will discuss this structure in detail in Section \ref{subsection: threshold wardrop}.

\subsection{Network Flows}

In the following, we denote $f^i_{od}$ as the average flow rate of EV users of o-d pair $(o,d)$ traveling on the $i$-th path $p_i\in\mathcal{P}_{od}$, such that the total flow for the o-d pair is $q_{od}=\sum_{i=1}^{K_{od}}f^i_{od}$. Furthermore, we define $f = [f_{od}^{i}]_{\forall (o,d) \in \mathcal O, i = 1,\dots,K_{od}}$ as the column vector of flows for all combinations of EV user types and paths. According to the conservation of flow constraints, the average arrival rates of users at charging stations and road arc flow rates can be calculated. Namely, the average arrival rate of EV users at any charging station $j\in\mathcal{J}$ satisfies the following equation:
\begin{equation}\label{eq:flow1}
\lambda_{j} = \sum_{(o,d) \in \mathcal O} \sum_{i = 1}^{K_{od}}\delta_{ji}f_{od}^{i}.
\end{equation}
Similarly, the mean flow rate of EV users on any road arc $a\in\mathcal{A}$ satisfies the following equation:
\begin{equation}\label{eq:flow2}
x_{a} = \sum_{(o,d) \in \mathcal O}\sum_{i = 1}^{K_{od}} \delta_{ai}f_{od}^{i}.
\end{equation}

\subsection{A Threshold-Based Wardropian Equilibrium Structure}
\label{subsection: threshold wardrop}

Following the classical definition of Wardrop's first principle (or the extension proposed by Holden \cite{holden1989wardrop}), we know that, at the state of equilibrium, for each OD pair, no trip-maker can decrease their experienced  trip cost with respect to their own energy requirement $\epsilon$  by unilaterally changing paths. Here we leverage this fact to characterize the structure of the FCS network's queue and charging load equilibrium. 

In the following, we denote the electricity price at the charging station on feasible path $p_i\in\mathcal{P}_{od}$ as $\theta_{od}^i$:
\begin{align}\label{pric.each.path}
    \theta_{od}^{i} = \sum_{j \in J} \delta_{ji}\upsilon_j.
\end{align}
Furthermore, we assume that   the paths in the feasible set $\mathcal{P}_{od}$  are indexed such that their electricity prices $\theta_{od}^{i}$  are in a decreasing order, i.e., for all paths $p_i$ and $p_{i+1}$ in $\mathcal{P}_{od}$, we have $\theta_{od}^{i} \geq \theta_{od}^{i+1}$.

\begin{lemma}
\label{lemma: firstprinc}
If a path flow pattern $f$ satisfies Wardrop's first principle, for each two  alternative  paths $p_k$ and $p_i$ connecting one o-d pair $(o,d)$ such that $k\geq i$, the energy requirement $\epsilon$ of any user (if any at all) traveling on path $p_i$  is less than or equal to the energy requirement $\epsilon'$ of any user (if any at all) traveling on path $p_k$.
\end{lemma}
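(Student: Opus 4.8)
The plan is to exploit the fact that, because each feasible path $p_i\in\mathcal{P}_{od}$ enters exactly one charging station, the indicator sum $\sum_{j\in J}\delta_{ji}$ in \eqref{eq:costfunction2} selects a single station, so the trip cost $C_{od}^{p_i}(\epsilon)$ is an \emph{affine} function of the energy demand $\epsilon$. Holding the equilibrium congestion $\lambda$ fixed (so that each $T_j(\lambda_j)$ is a constant), I would write
\begin{align}
C_{od}^{p_i}(\epsilon) = b_i + m_i\,\epsilon,
\end{align}
where $b_i$ collects all $\epsilon$-independent terms (road latency, the wait time $T_{j}(\lambda_{j})$, and the plug-in fee $\alpha\tau_j$ of the unique station on $p_i$) and the slope is $m_i = \gamma + \alpha\,\theta_{od}^i$. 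Since $\gamma$ is common to all paths, the ordering convention $\theta_{od}^i \geq \theta_{od}^{i+1}$ immediately gives that the slopes are nonincreasing in the path index, i.e., $k\geq i \Rightarrow m_k \leq m_i$.

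Next I would invoke Wardrop's first principle for the two users in question. A user with demand $\epsilon$ travelling on $p_i$ can have no cheaper alternative, so $C_{od}^{p_i}(\epsilon)\leq C_{od}^{p_k}(\epsilon)$; likewise a user with demand $\epsilon'$ travelling on $p_k$ satisfies $C_{od}^{p_k}(\epsilon')\leq C_{od}^{p_i}(\epsilon')$. Substituting the affine form and rearranging both inequalities yields the two-sided bound
\begin{align}
(m_k - m_i)\,\epsilon' \ \leq\ b_i - b_k \ \leq\ (m_k - m_i)\,\epsilon,
\end{align}
and hence $(m_k-m_i)(\epsilon'-\epsilon)\leq 0$. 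This is the usual single-crossing observation for two affine cost lines. The final sign analysis then closes the generic case: when the electricity prices differ, $k\geq i$ gives $m_k - m_i = \alpha(\theta_{od}^k - \theta_{od}^i) < 0$, so dividing flips the inequality and forces $\epsilon \leq \epsilon'$, exactly as claimed.

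The step I expect to be the main obstacle is the degenerate tie case $\theta_{od}^i = \theta_{od}^k$ (equal slopes $m_i = m_k$), where the crossing inequality becomes vacuous. Here Wardrop optimality with users present on both paths forces the intercepts to coincide ($b_i = b_k$), so the two paths carry identical cost for every $\epsilon$ and the assignment between them is entirely arbitrary; I would dispatch this case by appealing to the indexing/tie-breaking convention, treating cost-identical paths as interchangeable, under which the weak inequality $\epsilon\leq\epsilon'$ can always be arranged. A secondary point worth stating explicitly is that the whole argument is legitimate only because the equilibrium congestion $\lambda$ is held fixed across the comparison, which is precisely the content of Wardrop's principle and is what makes the costs genuinely affine in $\epsilon$.
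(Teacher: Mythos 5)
Your proof is correct and follows essentially the same route as the paper's: both apply Wardrop's principle to the user on $p_i$ and the user on $p_k$, combine the two resulting inequalities so that the intercept terms cancel, and use the decreasing ordering of $\theta_{od}^i$ to conclude $\epsilon\leq\epsilon'$. Your explicit treatment of the degenerate tie $\theta_{od}^i=\theta_{od}^k$ (where the paper's division step is vacuous and the weak ordering must be arranged by convention) is a small but worthwhile addition that the paper's proof glosses over.
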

\noindent\textit{Proof.} The proof is in the Appendix.

The observation made in Lemma \ref{lemma: firstprinc} matches the intuition discussed in beginning of Section \ref{section: network equilibrium model}. Namely, EV users with the largest energy requests will select paths with lowest electricity costs. Given this observation, the next theorem characterizes the structure of flows at equilibrium as a function of \textit{energy thresholds} for each o-d pair (see Fig. \ref{fig:thr}). We note that this energy threshold-based structure will later be useful in finding the equilibrium flows as the solution of a nonlinear program.
\begin{figure}
\centering
\includegraphics[width=0.75\columnwidth]{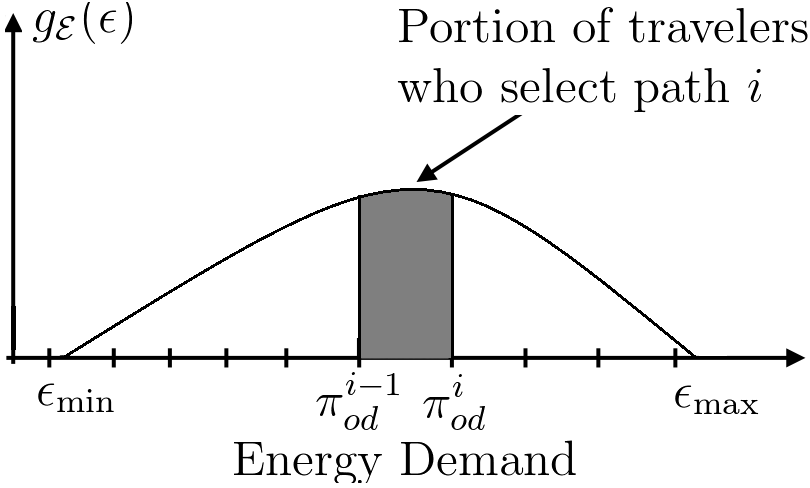} 
\caption{Example distribution of energy demand for EV users. We note that \textit{only} the EV users with energy demands in the shaded region (i.e., the interval $[\pi_{od}^{i-1},\pi_{od}^{i}]$) will select the unique path $p_i$ for the o-d pair $(o,d)$.}
\label{fig:thr}
\end{figure}

\begin{theorem}
\label{theorem: energy thresholds}
The equilibrium traffic pattern $ f_{od} = [f_{od}^{i}]_{i = 1,\dots,K_{od}}$ for each o-d pair $(o,d)$  is  characterized by a vector of energy thresholds  $\boldsymbol{\pi}_{od} = (\pi^0_{od},\pi^1_{od},\ldots,\pi^{K_{od}-1}_{od}, \pi^{K_{od}}_{od})$, where $\pi^0_{od} = \epsilon_{\min}$ and $\pi^{K_{od}}_{od} =  \epsilon_{\max}$. All customers with $\pi^{i-1}_{od} \leq \epsilon \leq \pi^{i}_{od}$ will choose path $p_i$. 

\noindent Moreover, for any used paths $p_i$ and $p_{i+1}$, 
\begin{align}\label{thresholds}
    \pi^{i}_{od} = \frac{\psi_{od}^{i+1} - \psi_{od}^{i}}{\alpha (\theta_{od}^{i} - \theta_{od}^{i+1})},
\end{align}where for path $p_i$ in pair (o,d):
        \begin{align}
            &C_{od}^{p_i}(\epsilon) = \psi_{od}^{i}  + \alpha \theta_{od}^{i}\epsilon
        \end{align}
        and
        \begin{align}   
            & \psi_{od}^{i} = \sum_{a \in A}  \delta_{ai}t_{a} + \sum_{j \in J}  \delta_{ji}(\gamma \epsilon +T_{j}(\lambda_{j})) 
           + \alpha \sum_{j \in J}\delta_{ji}\tau_{j}.
        \end{align}
Furthermore, the vector $\boldsymbol{\pi}_{od}$ characterizes the flow on  each path $p_i$ in $(o,d)$ by: 
\begin{equation} \label{thresholdtoflow}f_{od}^{i} =  q_{od} \left(\int_{\pi_{od}^{i-1}}^{\pi_{od}^{i}} g_{\mathcal{E}}(x) dx\right).\end{equation} 
\end{theorem}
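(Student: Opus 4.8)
The plan is to combine Lemma~\ref{lemma: firstprinc} with the per-user form of Wardrop's first principle and to exploit the fact that, once the equilibrium congestion vector is fixed, each path cost $C_{od}^{p_i}(\epsilon) = \psi_{od}^{i} + \alpha\theta_{od}^{i}\epsilon$ is an \emph{affine} function of the user's energy demand $\epsilon$. First I would fix an arbitrary equilibrium flow $f$ together with the congestion $\lambda$ it induces through \eqref{eq:flow1}, so that each $\psi_{od}^{i}$ and $\theta_{od}^{i}$ is a fixed quantity. Under the per-user principle, a user of o--d pair $(o,d)$ with demand $\epsilon$ travels on $p_i$ only if $p_i \in \argmin_{i'} C_{od}^{p_{i'}}(\epsilon)$. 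Since each $C_{od}^{p_i}(\cdot)$ is affine with slope $\gamma + \alpha\theta_{od}^{i}$, which is nonincreasing in $i$ under the chosen price ordering (the common $\gamma\epsilon$ charging-time term appears identically on every feasible path because each path visits exactly one station), the set of demands for which a given path is cost-minimizing is the region where its line lies on the \emph{lower envelope} of the affine family $\{C_{od}^{p_{i'}}(\cdot)\}$. This lower envelope is convex and piecewise linear, and because the slopes decrease in $i$, the minimizing index is nondecreasing in $\epsilon$; this is precisely the monotone sorting asserted by Lemma~\ref{lemma: firstprinc} and yields the partition of $[\epsilon_{\min},\epsilon_{\max}]$ into consecutive intervals.

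Next I would read off the thresholds as the breakpoints of this lower envelope. For two consecutive \emph{used} paths $p_i$ and $p_{i+1}$, the demand $\pi_{od}^{i}$ at which the assignment switches is exactly the crossing point of their two cost lines; equating $C_{od}^{p_i}(\pi_{od}^{i}) = C_{od}^{p_{i+1}}(\pi_{od}^{i})$, in which the $\gamma\pi_{od}^{i}$ terms cancel, and solving the resulting linear equation gives
\begin{equation*}
\pi_{od}^{i} = \frac{\psi_{od}^{i+1}-\psi_{od}^{i}}{\alpha(\theta_{od}^{i}-\theta_{od}^{i+1})},
\end{equation*}
which is \eqref{thresholds}; here $\theta_{od}^{i} > \theta_{od}^{i+1}$ for distinct used paths guarantees a well-defined denominator. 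The two extreme thresholds are set by boundary conditions: every demand must be served, so the smallest demands fall to the highest-price used path and the largest demands to the lowest-price used path, forcing $\pi_{od}^{0}=\epsilon_{\min}$ and $\pi_{od}^{K_{od}}=\epsilon_{\max}$. Paths that never attain the lower envelope are unused, and I would account for these by collapsing their intervals, i.e.\ $\pi_{od}^{i-1}=\pi_{od}^{i}$, which keeps the indexing consistent while assigning them zero flow.

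Finally, the flow formula follows from the infinitesimal-user (continuum) assumption: since each user's demand is an i.i.d.\ draw from $g_{\mathcal{E}}$ and users of $(o,d)$ arrive at total rate $q_{od}$, the rate routed onto $p_i$ equals $q_{od}$ times the probability that a demand lands in $[\pi_{od}^{i-1},\pi_{od}^{i}]$, namely $f_{od}^{i} = q_{od}\int_{\pi_{od}^{i-1}}^{\pi_{od}^{i}} g_{\mathcal{E}}(x)\,dx$, which is \eqref{thresholdtoflow}; the measure-zero threshold demands may be assigned to either adjacent path without affecting $f_{od}^{i}$. I expect the main obstacle to be the careful bookkeeping in the second step rather than any deep argument: one must verify that consecutive used-path cost lines actually cross inside $[\epsilon_{\min},\epsilon_{\max}]$ (so each used path receives a nondegenerate interval), rule out the degenerate case of tied prices $\theta_{od}^{i}=\theta_{od}^{i+1}$ (where only the path with smaller $\psi_{od}^{i}$ survives on the envelope), and confirm that the ordering $\pi_{od}^{0}\le\pi_{od}^{1}\le\cdots\le\pi_{od}^{K_{od}}$ is forced by the convexity of the lower envelope together with the monotone slopes.
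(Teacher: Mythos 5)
Your proof is correct but follows a genuinely different route from the paper's. The paper argues in two directions by contradiction: first it assumes the threshold pattern of \eqref{thresholds} is in place and runs an induction on the index $k$ of a hypothetical profitable deviation to show no user can benefit from switching paths; then it invokes Lemma~\ref{lemma: firstprinc} to argue only threshold-based flows can be equilibria and shows by a second contradiction that any alternative threshold vector $\boldsymbol{\beta}_{od}$ must coincide with $\boldsymbol{\pi}_{od}$. You instead fix the equilibrium congestion $\lambda$, observe that each $C_{od}^{p_i}(\cdot)$ is affine in $\epsilon$ with slopes ordered by the price indexing (correctly noting that the common $\gamma\epsilon$ term cancels because every feasible path visits exactly one station), and read the whole structure off the lower envelope of this affine family: the monotone-minimizer property gives the interval partition, the breakpoints give \eqref{thresholds} directly, and sufficiency is automatic since the envelope assignment is cost-minimizing by construction. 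Your argument is shorter and arguably more transparent --- it replaces the paper's index induction with a single standard fact about minima of affine functions with ordered slopes --- and it handles degenerate cases (unused paths, tied prices) more explicitly than the paper does. One small slip: the pointwise minimum of affine functions is \emph{concave}, not convex, piecewise linear; the property you actually use (the active slope is nonincreasing in $\epsilon$, so the minimizing index is nondecreasing) is exactly the concavity of the envelope, so the conclusion stands, but the label should be corrected in both places where you write ``convexity of the lower envelope.''
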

\noindent\textit{Proof.} The proof is in the Appendix.

In summary, Theorem \ref{theorem: energy thresholds} informs the CNO how the population of EV users will select paths based on their energy demands. Namely, Theorem \ref{theorem: energy thresholds} states that all travelers with o-d pair $(o,d)$ can be partitioned based on their energy demand and then all travelers within a partition will select the same unique path between $o$ and $d$ making use of the same charging station. Figure \ref{fig:thr} gives a graphical representation of this threshold structure.  This type of threshold-based  equilibrium structure is similar to the type of user equilibrium (UE) flows if one considers traffic patterns of non-EV drivers with continuous value of time distributions \cite{leurent}.

\begin{corollary}
\label{corollary: equal cost}
At equilibrium, for any utilized paths $p_i$ and $p_{i+1}$,  we must have:
\begin{align} 
\label{thr.UE1} 
C_{od}^{p_i}(\pi^{i}_{od}) = ~ C_{od}^{p_{i+1}}(\pi^{i}_{od}).
\end{align}
\end{corollary}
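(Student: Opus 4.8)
The plan is to verify the equal-cost condition by direct substitution of the threshold formula into the affine cost representation established in Theorem~\ref{theorem: energy thresholds}. Recall from that theorem that for any path $p_i \in \mathcal{P}_{od}$ the total average cost is affine in the energy demand, $C_{od}^{p_i}(\epsilon) = \psi_{od}^{i} + \alpha \theta_{od}^{i}\epsilon$, where the slope $\alpha\theta_{od}^i$ is pinned down by the path's electricity price and the intercept $\psi_{od}^i$ collects the (congestion-dependent but $\epsilon$-independent) latency and plug-in terms. Since the paths are indexed so that $\theta_{od}^{i} \geq \theta_{od}^{i+1}$, I would first dispose of the degenerate case: if $\theta_{od}^{i} = \theta_{od}^{i+1}$ the two cost lines are parallel, so whichever path has the larger intercept $\psi$ is strictly dominated for every $\epsilon$ and cannot be utilized. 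Hence for two genuinely utilized consecutive paths the ordering is strict, $\theta_{od}^{i} > \theta_{od}^{i+1}$, and the denominator in \eqref{thresholds} is nonzero, making $\pi^i_{od}$ well defined.

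The central step is then purely algebraic. I would evaluate the difference of the two affine costs at $\epsilon = \pi^i_{od}$, writing $C_{od}^{p_i}(\pi^i_{od}) - C_{od}^{p_{i+1}}(\pi^i_{od}) = (\psi_{od}^{i} - \psi_{od}^{i+1}) + \alpha(\theta_{od}^{i} - \theta_{od}^{i+1})\,\pi^i_{od}$, and substitute the threshold value $\pi^i_{od} = (\psi_{od}^{i+1} - \psi_{od}^{i})/[\alpha(\theta_{od}^{i} - \theta_{od}^{i+1})]$ from \eqref{thresholds}. The factor $\alpha(\theta_{od}^{i} - \theta_{od}^{i+1})$ cancels, leaving $(\psi_{od}^{i} - \psi_{od}^{i+1}) + (\psi_{od}^{i+1} - \psi_{od}^{i}) = 0$, which is exactly the claim. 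Equivalently, and perhaps cleaner to state, I would run the argument in reverse: setting $\psi_{od}^{i} + \alpha\theta_{od}^{i}\epsilon = \psi_{od}^{i+1} + \alpha\theta_{od}^{i+1}\epsilon$ and solving this single linear equation for $\epsilon$ returns precisely the expression defining $\pi^i_{od}$, so the threshold is by construction the unique indifference point of the two cost lines.

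I would close by recording the equilibrium interpretation, which is really the substance of the corollary: $\pi^i_{od}$ is the energy demand of the marginal user who sits on the boundary between the partition intervals assigned to $p_i$ and $p_{i+1}$, and Wardrop's first principle (Lemma~\ref{lemma: firstprinc} together with Theorem~\ref{theorem: energy thresholds}) requires that such a user, who may travel on either path, experience equal cost on both. There is no genuine obstacle in this proof: it is a one-line computation once the affine form and the threshold definition are in hand. The only point demanding care is the strict-ordering subtlety noted above, which guarantees the threshold is a single well-defined crossing point rather than a degenerate $0/0$ or an empty interval.
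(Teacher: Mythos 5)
Your proof is correct and takes the same route the paper intends: the corollary is an immediate consequence of Theorem~\ref{theorem: energy thresholds}, since the threshold $\pi^i_{od}$ in \eqref{thresholds} is by construction the unique solution of $\psi_{od}^{i} + \alpha\theta_{od}^{i}\epsilon = \psi_{od}^{i+1} + \alpha\theta_{od}^{i+1}\epsilon$, and your substitution verifies exactly that. The paper offers no separate proof for this corollary, and your additional remark on the degenerate case $\theta_{od}^{i}=\theta_{od}^{i+1}$ is a reasonable (if not strictly necessary) piece of bookkeeping.
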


\noindent Corollary \ref{corollary: equal cost} states that if an EV user's energy request falls on the threshold between two partitions, then the travel cost is equal between the two corresponding paths.

\subsection{The User Equilibrium Optimization Problem}

In this section, we develop a nonlinear minimization problem whose solution(s) satisfy the user equilibrium conditions from Section \ref{subsection: threshold wardrop}.  In the following, we make use of the notation listed in Table \ref{tab.not}.
\begin{table}[]
\begin{center}
\resizebox{0.70\columnwidth}{!}{%
\begin{tabular}{r c p{0.55\columnwidth} }
\toprule

$G(\epsilon) $ & $ \triangleq$ & $\int_{0}^{\epsilon}g_{\mathcal{E}}(x) dx$, the cumulative density function for energy demands $\epsilon$ \\

$G^{-1}(t)$ & $\triangleq$ & $ \text{SUP}\{ \epsilon; G(\epsilon) < t\}$, the inverse function for the energy demand CDF, $G(\epsilon) $\\

$E(x)$ & $\triangleq$ & $\int_{0}^{x} G^{-1}(t) dt$\\


$Q_{od}^{i}$ & $\triangleq$ &  ${ \sum_{j=1}^{i} f_{od}^{j}}$, { sum of flows on paths $1,\dots,i$ (i.e., sum of flows on the paths whose electricity prices are greater than or equal to $\theta_{od}^i$)}  \\  

\bottomrule
\end{tabular}}
\end{center}
\caption{Additional notation.}
\label{tab.not}
\end{table}

A major challenge we encounter in characterizing the equilibria is due to the nonlinear energy partition to flow mapping in \eqref{thresholdtoflow}. To overcome this obstacle, we define the value $\frac{Q^i_{od}}{q_{od}}$ as the portion of the population traveling between the o-d pair $(o,d)$ on the paths $p_1$ to $p_i$. This new variable allows us to write the energy-thresholds $\pi^i_{od}$ from Theorem \ref{theorem: energy thresholds} in the form $\pi^i_{od}=G^{-1}(\frac{Q^i_{od}}{q_{od}})$. Given this relationship, we can now formulate a nonlinear minimization problem whose global solution satisfies the stated user equilibrium conditions in Theorem \ref{theorem: energy thresholds}.

\begin{theorem} 
\label{thr.UE}
The traffic pattern $f$ is a user equilibrium traffic and charge pattern in the charging station network if and only if it solves the following nonlinear optimization problem:
\begin{subequations} 
\label{eq:UE.problem}
	\begin{align}
	& \underset{f}{\textrm{minimize}}
	&& \sum_{a \in A} x_a t_a + \alpha \sum_{j \in J} \lambda_j \tau_j + \sum_{j \in J} \int_{0}^{\lambda_j} T_{j}(x) dx ~ + \nonumber \\ 
	& && \alpha\!\!\! \sum_{(o,d) \in \mathcal O}\!\!q_{od} \! \left( \sum_{i = 1}^{K_{od}} \theta_{od}^{i}  \left[ E(\frac{Q_{od}^{i}}{q_{od}}) - E(\frac{Q_{od}^{i-1}}{q_{od}}) \right] \right) \label{UEopt}\\
		& \text{subject to:} \nonumber\\
	& \forall a \in A:
	& & x_a = \sum_{(o,d) \in \mathcal O} \sum_{i = 1}^{K_{od}}  \delta_{a i} f_{od} ^{i}, \label{eq:UE.cons1}\\
	& \forall j \in J:
	& & \lambda_j = \sum_{(o,d) \in \mathcal O} \sum_{i = 1}^{K_{od}}  \delta_{j i}f_{od} ^{i}, \label{eq:UE.cons2}\\
	&  \forall  (o,d)   :
	& &  q_{od} = \sum_{i = 1}^{K_{od}}  f_{od} ^{i} \label{eq:UE.cons3},\\
	& \forall (o,d), \ \forall i:
	& &  Q_{od}^{i} = \sum_{k \leq i} f_{od}^{k
	} \label{eq:UE.cons4},\\
	& \forall (o,d),  \ \forall i:
	& &  f_{od}^{i} \geq 0 \label{eq:UE.cons5}.
	\end{align}
	\end{subequations}
\end{theorem}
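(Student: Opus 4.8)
The plan is to prove this equivalence in the classical Beckmann style: I will exhibit the program \eqref{eq:UE.problem} as a \emph{convex} optimization problem and then show that its Karush--Kuhn--Tucker (KKT) system is \emph{identical} to the threshold equilibrium conditions already established in Theorem \ref{theorem: energy thresholds} and Corollary \ref{corollary: equal cost}. Since a convex program with affine constraints has KKT conditions that are both necessary and sufficient for global optimality (the constraints \eqref{eq:UE.cons1}--\eqref{eq:UE.cons5} are all affine in $f$, so no Slater-type qualification is needed), establishing this identity yields both directions of the ``if and only if'' at once.

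\textbf{Step 1 (convexity).} First I would substitute the defining equalities \eqref{eq:UE.cons1}, \eqref{eq:UE.cons2}, and \eqref{eq:UE.cons4} to view the objective \eqref{UEopt} purely as a function of $f$. The road term $\sum_a x_a t_a$ and the plug-in term $\alpha\sum_j \lambda_j\tau_j$ are linear in $f$; each congestion term $\int_0^{\lambda_j}T_j(x)\,dx$ is convex because $T_j$ is increasing. The only delicate term is the charging term. Here I would apply summation by parts to $\sum_{i} \theta_{od}^{i}\big[E(\tfrac{Q_{od}^{i}}{q_{od}}) - E(\tfrac{Q_{od}^{i-1}}{q_{od}})\big]$, using $Q_{od}^{0}=0$ and $Q_{od}^{K_{od}}=q_{od}$, to rewrite it as a constant $\theta_{od}^{K_{od}}E(1)$ plus $\sum_{i=1}^{K_{od}-1}(\theta_{od}^{i}-\theta_{od}^{i+1})\,E(\tfrac{Q_{od}^{i}}{q_{od}})$. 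Because the paths are indexed so that $\theta_{od}^{i}\ge\theta_{od}^{i+1}$, every coefficient $\theta_{od}^{i}-\theta_{od}^{i+1}$ is nonnegative; and $E$ is convex since $E'=G^{-1}$ is nondecreasing. Each summand is then a nonnegative multiple of a convex function composed with the affine map $f\mapsto Q_{od}^{i}/q_{od}$, so the charging term, and hence the entire objective, is convex.

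\textbf{Step 2 (KKT equals threshold equilibrium).} Next I would keep \eqref{eq:UE.cons3} (flow conservation) with multiplier $\mu_{od}$ and the sign constraints \eqref{eq:UE.cons5} with multipliers $\nu_{od}^{i}\ge 0$, and compute $\partial(\text{obj})/\partial f_{od}^{i}$. The road, plug-in, and congestion parts differentiate to exactly $\tilde\psi_{od}^{i}:=\sum_a \delta_{ai}t_a + \sum_j \delta_{ji}T_j(\lambda_j) + \alpha\sum_j\delta_{ji}\tau_j$, which is $\psi_{od}^{i}$ stripped of the common charging-time offset $\gamma\epsilon$ (path-independent and constant in aggregate, which is why it does not appear in \eqref{UEopt}). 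Using the summation-by-parts form together with $\partial Q_{od}^{k}/\partial f_{od}^{i}=\mathbb{1}[k\ge i]$, I would show that the charging-term contribution telescopes, so that the stationarity condition for two consecutive used paths reduces to $\psi_{od}^{i+1}-\psi_{od}^{i}=\alpha(\theta_{od}^{i}-\theta_{od}^{i+1})\,G^{-1}(Q_{od}^{i}/q_{od})$. Comparing with \eqref{thresholds}, this is precisely $\pi_{od}^{i}=G^{-1}(Q_{od}^{i}/q_{od})$, the threshold--flow relation of Theorem \ref{theorem: energy thresholds}; combined with \eqref{thresholdtoflow} it reproduces $f_{od}^{i}=q_{od}\int_{\pi_{od}^{i-1}}^{\pi_{od}^{i}}g_{\mathcal E}$. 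The complementary-slackness half ($\nu_{od}^{i}\ge 0$, $\nu_{od}^{i}f_{od}^{i}=0$) handles zero-flow paths, forcing their threshold intervals to collapse ($\pi_{od}^{i-1}=\pi_{od}^{i}$) consistently with Corollary \ref{corollary: equal cost} and the monotone ordering of Lemma \ref{lemma: firstprinc}.

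I expect the main obstacle to be exactly the differentiation and telescoping of the charging term and its interpretation through $G^{-1}$: one must verify that the marginal charging cost of a unit exchange of flow between paths $p_i$ and $p_{i+1}$ equals $\alpha(\theta_{od}^{i}-\theta_{od}^{i+1})G^{-1}(Q_{od}^{i}/q_{od})$, which is what makes the KKT stationarity coincide with the equilibrium threshold \eqref{thresholds} rather than with an ordinary equal-path-cost condition. A secondary subtlety is careful bookkeeping of the unused and degenerate paths (zero flow, possibly tied prices $\theta_{od}^{i}=\theta_{od}^{i+1}$, and boundary thresholds $\pi_{od}^{0}=\epsilon_{\min}$, $\pi_{od}^{K_{od}}=\epsilon_{\max}$), so that the sign conditions on $\nu_{od}^{i}$ translate correctly into the monotone assignment of Lemma \ref{lemma: firstprinc}. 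Once these are in place, convexity renders the KKT conditions sufficient as well, closing the equivalence in both directions.
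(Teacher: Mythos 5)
Your proposal is correct and follows the same Beckmann-style strategy as the paper: form the Lagrangian/KKT system of \eqref{eq:UE.problem} and show its stationarity conditions coincide with the threshold relation \eqref{thresholds} of Theorem \ref{theorem: energy thresholds}. The differences are organizational rather than substantive, but worth noting. The paper keeps the quantiles $Q_{od}^{i}$ as explicit variables with multipliers $v_{od}^{i}$ on \eqref{eq:UE.cons4}, so stationarity splits into $\partial L/\partial f_{od}^{i}=\psi_{od}^{i}-\sum_{k\ge i}v_{od}^{k}-z_{od}=0$ and $\partial L/\partial Q_{od}^{i}=\alpha(\theta_{od}^{i}-\theta_{od}^{i+1})G^{-1}(Q_{od}^{i}/q_{od})+v_{od}^{i}=0$; eliminating $v_{od}^{i}$ gives exactly the telescoped condition you obtain by substituting the constraints and differentiating directly, so the two routes are algebraically equivalent. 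Where you genuinely add something is convexity: the paper's proof of this theorem only matches first-order conditions and leaves the sufficiency half (first-order conditions imply global optimality) implicit, deferring the convexity of the charging term to the Hessian computation in the proof of Theorem \ref{uniq.th.eq.}; your Abel-summation rewriting of the charging term as $\theta_{od}^{K_{od}}E(1)+\sum_{i<K_{od}}(\theta_{od}^{i}-\theta_{od}^{i+1})E(Q_{od}^{i}/q_{od})$, a nonnegative combination of convex functions of affine maps, is a cleaner and self-contained way to close the ``if and only if.'' Your handling of the path-common $\gamma\epsilon$ offset and of zero-flow paths via complementary slackness is also consistent with what the paper does (it restricts attention to utilized paths); just make sure in the write-up that ties $\theta_{od}^{i}=\theta_{od}^{i+1}$ are treated by merging the corresponding paths rather than dividing by zero in \eqref{thresholds}.
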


\noindent\textit{Proof.} The proof is in the Appendix.

In Theorem \ref{thr.UE}'s proposed traffic and charge assignment problem formulation, the objective function \eqref{UEopt} accounts for the EV population's collective travel time on road arcs (first term), the EV population's total cost from plug-in fees (second term), the collective congestion experienced at the charging stations (third term), and the total cost due to purchasing energy at the charging stations (fourth term). The constraints are described as follows: \eqref{eq:UE.cons1} calculates the EV flow on each road arc, \eqref{eq:UE.cons2} calculates the EV arrival rate at each charging station, \eqref{eq:UE.cons3} ensures that all the EV flow for an o-d pair is accounted for in the path flows of that o-d pair, \eqref{eq:UE.cons4} calculates the percentiles of EV flow traveling on each path for a given o-d pair, and lastly \eqref{eq:UE.cons5} ensures all EV flows are non-negative. 

Furthermore, in Theorem \ref{uniq.th.eq.} we are able to ensure uniqueness of the equilibrium flows as well as the energy-thresholds for partitioning the EV users, under certain assumptions.

\begin{theorem}
\label{uniq.th.eq.}
If charging station wait time functions $T_j(\lambda_j)$ are  univariate functions of $\lambda_j$, the equilibrium path flows, $[f^i_{od}]_{\forall (o,d)\in\mathcal{O},i=1,\dots,K_{od}}$, as well as the  thresholds, $[G^{-1}(\frac{Q^i_{od}}{q_{od}})]_{\forall (o,d)\in\mathcal{O},i=1,\dots,K_{od}}$, are unique.  
\end{theorem}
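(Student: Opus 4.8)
The plan is to prove uniqueness by showing that the program \eqref{eq:UE.problem}, whose minimizers coincide with the network equilibria by Theorem \ref{thr.UE}, is a convex program whose strictly convex pieces pin down every quantity named in the theorem. First I would note that the feasible region is polyhedral (all of \eqref{eq:UE.cons1}--\eqref{eq:UE.cons5} are linear), so it suffices to analyze the objective \eqref{UEopt}. Its first two terms are linear in $f$; the third term, $\sum_{j}\int_0^{\lambda_j}T_j(x)\,dx$, is convex because $T_j$ is strictly increasing, and the univariate hypothesis of Theorem \ref{uniq.th.eq.} is exactly what makes it a separable sum with $\frac{d^2}{d\lambda_j^2}\int_0^{\lambda_j}T_j(x)\,dx=T_j'(\lambda_j)>0$, i.e.\ strictly convex in each $\lambda_j$. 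For the fourth term I would rewrite it by summation by parts, using $Q^0_{od}=0$, $Q^{K_{od}}_{od}=q_{od}$ and $E(0)=0$, as
\[
\alpha\sum_{(o,d)}q_{od}\Big(\theta^{K_{od}}_{od}E(1)+\sum_{i=1}^{K_{od}-1}\big(\theta^i_{od}-\theta^{i+1}_{od}\big)\,E\big(\tfrac{Q^i_{od}}{q_{od}}\big)\Big),
\]
and observe that $E$ is strictly convex because $E'=G^{-1}$ is strictly increasing (the assumption that $g_{\mathcal{E}}$ is bounded below away from zero on its compact support makes $G$ a strictly increasing continuous bijection, so $G^{-1}$, and hence $E$, is well defined and strictly convex). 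Since the paths are ordered so that $\theta^i_{od}\ge\theta^{i+1}_{od}$, every coefficient $\theta^i_{od}-\theta^{i+1}_{od}$ is nonnegative, so this term is convex and is strictly convex in each $Q^i_{od}$ with $\theta^i_{od}>\theta^{i+1}_{od}$.

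With convexity in hand, the uniqueness argument is the standard one: the set of minimizers is convex and the objective is constant on it, so every strictly convex summand must be constant along any segment joining two minimizers, forcing its argument to agree across all minimizers. The third term thus forces every $\lambda_j$ to coincide, and the fourth term forces every $Q^i_{od}$ with $\theta^i_{od}>\theta^{i+1}_{od}$ to coincide. The structural fact that turns this into the theorem is that constraint \eqref{eq:UE.cons4} gives $f^i_{od}=Q^i_{od}-Q^{i-1}_{od}$ exactly, so uniqueness of all the cumulative flows $Q^i_{od}$ is equivalent to uniqueness of all path flows; and since $\pi^i_{od}=G^{-1}(Q^i_{od}/q_{od})$ with $G^{-1}$ a strictly increasing bijection, it is equally equivalent to uniqueness of the thresholds. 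The whole theorem therefore reduces to showing that \emph{every} $Q^i_{od}$ is common to all minimizers.

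The step I expect to be the main obstacle is precisely the indices with coinciding prices, $\theta^i_{od}=\theta^{i+1}_{od}$, where the fourth term is flat and does not by itself pin down $Q^i_{od}$. To close this gap I would invoke the remaining strict convexity, the station term, together with the already-established uniqueness of every $\lambda_j$. For two adjacent used paths of equal price, Corollary \ref{corollary: equal cost} forces $\psi^i_{od}=\psi^{i+1}_{od}$ and hence identical cost functions; if the two paths traverse distinct stations, shifting flow between them strictly changes $\sum_j\int_0^{\lambda_j}T_j(x)\,dx$, so the split consistent with the unique $\lambda_j$ is itself unique, whereas if they share a station (or one has strictly larger road latency) the redundant path carries no flow. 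Carrying out this bookkeeping over each maximal block of equal-price paths shows all $Q^i_{od}$ are common to the minimizers, and the equivalences above then yield uniqueness of both the path flows and the thresholds. This equal-price case is the delicate part of the write-up; everywhere else the conclusion is an immediate consequence of the strict convexity of the third and fourth terms.
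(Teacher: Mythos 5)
Your proposal follows the same overall strategy as the paper's proof: both reduce uniqueness to a convexity analysis of the program \eqref{eq:UE.problem} from Theorem \ref{thr.UE}, whose feasible set is polyhedral, so everything hinges on the objective. For the electricity term the paper computes the Hessian directly and exhibits it as the weighted sum of squares $\sum_k (\theta^k-\theta^{k+1})(y_k(f)/q)\bigl(\sum_i \Gamma_{ki}h^i\bigr)^2 \ge 0$; your summation-by-parts rewriting into $\sum_i (\theta^i_{od}-\theta^{i+1}_{od})E(Q^i_{od}/q_{od})$ with $E$ strictly convex is the same computation in integrated form, and is arguably cleaner. Where you genuinely depart from the paper is in what happens \emph{after} convexity is established: the paper simply declares the objective ``strictly convex'' and concludes uniqueness of path flows and thresholds in one line, which glosses over exactly the two issues you isolate, namely that strict convexity is available only in the aggregate variables $\lambda_j$ and in those $Q^i_{od}$ with $\theta^i_{od}>\theta^{i+1}_{od}$ (not in $f$ itself), and that passing from unique $(\lambda,Q)$ to unique $f$ uses the identity $f^i_{od}=Q^i_{od}-Q^{i-1}_{od}$ together with the monotone bijection $G^{-1}$ for the thresholds. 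Your extra bookkeeping over maximal blocks of equal-price paths is therefore a genuine strengthening of the published argument rather than a detour. The one place your own write-up is still soft is the subcase of two equal-price, equal-road-latency paths routed through the \emph{same} station: there the claim that ``the redundant path carries no flow'' does not follow, and the split between the two paths is in fact indeterminate (only their sum, hence $\lambda_j$ and the thresholds, is pinned down). That degenerate configuration is not handled by the paper's proof either, so your proposal matches or exceeds it everywhere except for that residual case, which you should either exclude by assumption or restate the conclusion for in terms of aggregate station flows.
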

\noindent\textit{Proof.} The proof is in the Appendix.

The reader should note that the assumption put forward under Theorem \ref{uniq.th.eq.} does not hold if the wait time to plug in depends on the thresholds $\boldsymbol{\pi}_{od}$, i.e., if the service time at the station is directly affected by the charge amount distribution. Hence, this is only an appropriate assumption when users park and charge their vehicle at the station while performing an activity such as shopping or dining. However, if this is not the case, and charging is the sole purpose of the users for coming to the station (as is the case in many fast charging stations), this uniqueness result is not valid, and there could be multiple equilibria. Further discussion on this is provided in Appendix \ref{subsection: queues} (available as supplementary material).

\section{Socially Optimal Pattern}\label{sec.so}

In the social optimal (SO) view, the CNO  has the objective to minimize the total latency and electricity cost due to the actions of all EV users throughout the entire network. Therefore, the socially optimal flow is the solution of:
	\begin{align}\label{Soptocialy}
	 &\underset{f}{\text{minimize}}
	 \sum_{a \in A} x_a t_a + \sum_{j \in J} \lambda_j T_{j}(\lambda_j)  +  
     \alpha \sum_{j \in J} D_j \bigg( U_j(Q)\bigg), 
	\\&\nonumber  \text{subject to} ~ \eqref{eq:UE.cons1}-\eqref{eq:UE.cons5}, 
	 \end{align} where $D(\cdot)$ is a strictly convex and continuously differentiable electricity cost function for each charging station that calculates the payment the CNO owes the electricity retailer. We note that for station $j$, $D_j(\cdot)$ is a function of  the expected total energy consumption $U_j(Q)$ of that charging station:
	 \begin{align} 
   U_j(Q) = \sum_{(o,d) \in \mathcal O} q_{od} \sum_{i = 1}^{K_{od}} \delta_{ij} \bigg[ E(\frac{Q_{od}^{i}}{q_{od}}) - E(\frac{Q_{od}^{i-1}}{q_{od}}) \bigg].
\end{align}

As the only congestible resources in our model are the charging stations, and each user only visits a single station on their path, the CNO's goal is to  calculate the plug-in fees and electricity prices for each charging station that drive the equilibrium flow pattern to be equal to the SO pattern. 

\begin{proposition}
\label{prop:socialoptfees}
If the condition set forth  in Theorem \ref {uniq.th.eq.} is satisfied (i.e., if the threshold-based equilibrium is unique),  there exists an anonymous charging station pricing scheme that can enforce the system optimum flow. Specifically, the socially optimal plug-in fee ($\tau_j$) and the cost of electricity ($\upsilon_j$) at charging station $j$ are as follows:
\begin{align}
    & \tau_j = \lambda_j T_{j}^{'}(\lambda_j),
     ~~\upsilon_j =  \frac{ \partial D_j \big( U_j(Q)\big)}{\partial U_j(Q)}. \label{toll.final}
\end{align}
\end{proposition}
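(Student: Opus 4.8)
The plan is to prove this by the classical marginal-cost (Pigouvian) pricing argument. Note first that the user-equilibrium program \eqref{eq:UE.problem} and the socially optimal program \eqref{Soptocialy} are convex minimizations over the \emph{same} feasible polytope, namely the linear flow-conservation and nonnegativity constraints \eqref{eq:UE.cons1}--\eqref{eq:UE.cons5}. Hence, by convexity and the KKT conditions, it suffices to choose $\tau_j,\upsilon_j$ so that the two objectives have coinciding gradients at the socially optimal flow $f^{\mathrm{SO}}$. As a preliminary step I would rewrite the fourth term of \eqref{UEopt} in station form by substituting $\theta_{od}^{i}=\sum_{j\in J}\delta_{ji}\upsilon_j$ and interchanging the order of summation, giving
\begin{align}
\alpha\!\!\sum_{(o,d)\in\mathcal O}\!\!q_{od}\sum_{i=1}^{K_{od}}\theta_{od}^{i}\Big[E(\tfrac{Q_{od}^{i}}{q_{od}})-E(\tfrac{Q_{od}^{i-1}}{q_{od}})\Big]=\alpha\sum_{j\in J}\upsilon_j\,U_j(Q).\nonumber
\end{align}
Thus the UE objective becomes $\sum_{a\in A}x_a t_a+\alpha\sum_{j\in J}\lambda_j\tau_j+\sum_{j\in J}\int_0^{\lambda_j}T_j(x)\,dx+\alpha\sum_{j\in J}\upsilon_j U_j(Q)$, which differs from the SO objective only in the congestion term ($\int_0^{\lambda_j}T_j$ versus $\lambda_j T_j(\lambda_j)$) and the electricity term ($\upsilon_j U_j$ versus $D_j(U_j)$).

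Next I would differentiate both objectives with respect to a generic path flow $f_{od}^{i}$. Using $\lambda_j=\sum_{(o,d)}\sum_i\delta_{ji}f_{od}^i$, the road-latency gradients $\sum_{a\in A}\delta_{ai}t_a$ are identical on both sides, so only the congestion and electricity parts must be matched. On the SO side the congestion gradient is $\sum_{j\in J}\delta_{ji}\big(T_j(\lambda_j)+\lambda_j T_j'(\lambda_j)\big)$, whereas on the UE side it is $\sum_{j\in J}\delta_{ji}T_j(\lambda_j)+\alpha\sum_{j\in J}\delta_{ji}\tau_j$; the electricity gradients are $\alpha\sum_{j\in J}D_j'(U_j)\,\partial_{f_{od}^i}U_j$ and $\alpha\sum_{j\in J}\upsilon_j\,\partial_{f_{od}^i}U_j$ respectively, where the common factor $\partial_{f_{od}^i}U_j$ (obtained from $E'=G^{-1}$ and $\partial_{f_{od}^i}Q_{od}^k=\mathbf{1}\{k\ge i\}$) need not be evaluated explicitly. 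Equating the two gradients term by term forces $\upsilon_j=D_j'\big(U_j(Q)\big)$ from the electricity part and forces the plug-in fee to equal a value-of-time rescaling of the marginal congestion externality $\lambda_j T_j'(\lambda_j)$ from the congestion part; these are precisely the prices in \eqref{toll.final}. The interpretation is standard: each arriving EV experiences only $T_j$ but adds $\lambda_j T_j'(\lambda_j)$ to everyone else's wait, and $D_j'(U_j)$ is the marginal cost of electricity, so the prices internalize exactly the gap between marginal social and marginal private cost at each station.

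With the prices fixed at these values, $f^{\mathrm{SO}}$ satisfies the stationarity condition of the SO program with some multipliers for the linear constraints \eqref{eq:UE.cons1}--\eqref{eq:UE.cons5}; since the UE gradient coincides with the SO gradient at $f^{\mathrm{SO}}$, the \emph{same} point and multipliers satisfy the KKT system of the UE program. By convexity of the UE objective, $f^{\mathrm{SO}}$ is then a global minimizer of \eqref{eq:UE.problem}, hence (by Theorem \ref{thr.UE}) a Wardrop traffic-and-charge equilibrium under the prices \eqref{toll.final}. Finally, the prices depend only on station-level aggregates $\lambda_j$ and $U_j(Q)$ and not on any individual user's identity or energy request, so the scheme is anonymous.

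The main obstacle is the \emph{implicit, fixed-point} character of \eqref{toll.final}: the prices are defined through the unknown SO quantities $\lambda_j,U_j(Q)$, so the argument above only shows that $f^{\mathrm{SO}}$ is \emph{an} equilibrium under these prices. To conclude that the induced equilibrium actually equals $f^{\mathrm{SO}}$ (rather than being one of several equilibria), I would invoke the uniqueness guaranteed under the hypothesis of Theorem \ref{uniq.th.eq.}, which is exactly why that condition is assumed in the statement. A secondary technical point is that $E(\cdot)=\int_0^{\cdot}G^{-1}$ and $G^{-1}$ need not be everywhere differentiable, so the gradient matching should be read in a subgradient sense, with the convexity of the $U_j$ and electricity terms inherited from the analysis underlying Theorem \ref{thr.UE}.
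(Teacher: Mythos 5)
Your proposal is correct and takes essentially the same route as the paper: compare the KKT stationarity conditions of the convex UE program \eqref{eq:UE.problem} and the SO program \eqref{Soptocialy} over their common feasible polytope, match gradients at $f^{\mathrm{SO}}$ to read off $\upsilon_j=\partial D_j/\partial U_j$ and the plug-in fee from the marginal congestion externality, and invoke the uniqueness hypothesis so that the induced equilibrium is $f^{\mathrm{SO}}$ itself (your station-form rewriting of the electricity term and the explicit anonymity/fixed-point remarks are just cleaner statements of what the paper leaves implicit). The one substantive point you flag --- that the congestion matching actually yields $\alpha\tau_j=\lambda_j T_j'(\lambda_j)$, i.e.\ a value-of-time rescaling --- is real: the paper's own displayed KKT line silently drops that factor of $\alpha$, so your version is the dimensionally consistent one.
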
 

\noindent\textit{Proof.} The proof is in the Appendix.

We note that it would still be possible to decide station prices that minimize the social cost of the worst-case equilibrium in the case of non-unique equilibria, though this is clearly a harder problem.

\section{Numerical Experiment}
\label{sec.numerical}

\subsection{Bay Area Network (1 OD Pair) \textbf{without} Initial Energy Range Limitations}
\label{sec:no range constraints}

In this section, we present simulation results for the emergent equilibrium of a transportation network equipped with fast charging stations. We first show the structure of the Wardrop equilibrium without using plug-in fees and then we show the effects of using the socially optimal plug-in fees to reduce congestion. \textcolor{black}{In this section, we only look at 1 OD pair and we do not consider the initial energy levels of the users and the resulting range limitations (we present results from these cases in Sections \ref{sec:yes range constraints} and \ref{sec:larger example}).}

\begin{figure}
\centering
\includegraphics[width=0.55\columnwidth]{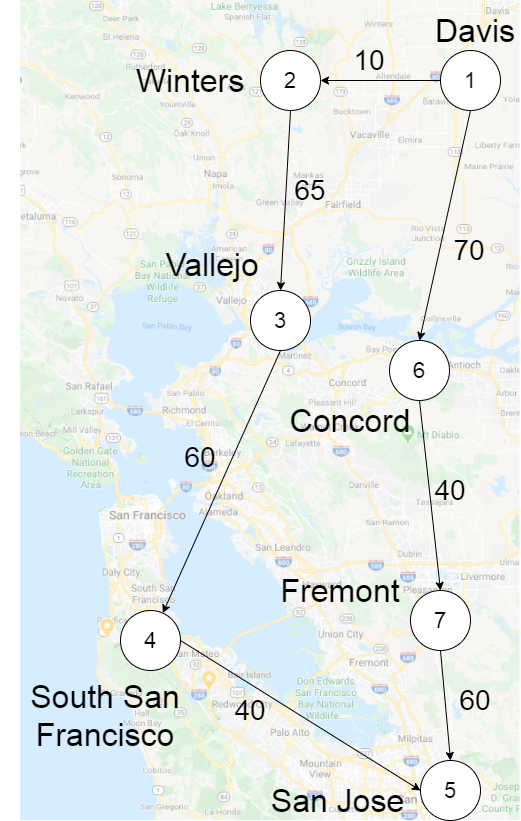} 
\caption{Bay area map with road travel times (minutes).}
\label{fig:map}
\end{figure}

In the following, we make use of a transportation network (7 arcs, 7 nodes) modeled after the San Francisco Bay Area region in Northern California as shown in Figure \ref{fig:map}. In this example we assume $q_{od} =100$ EVs/hour traveling south from Davis to San Jose with transit times (in minutes) shown in Figure \ref{fig:map}. Each of the 7 cities has fast charging available for the EVs with a maximum capacity of 10 EVs and all are assumed to have the same charging rate \textcolor{black}{(50kW DC, CHAdeMO standard)}. 
With each vehicle requiring a set of arcs connecting Davis to San Jose that enters 1 charging station, there are 9 feasible travel paths within the entire network. Furthermore, we assume that the electricity prices the CNO pays the electricity retailer are the day-ahead locational marginal prices (LMPs) for each of the locations: Davis, Winters, Vallejo, South San Francisco, San Jose, Concord, and Fremont. The electricity prices are: $[17.14 \; 17.34 \; 21.56 \; 22.25 \; 22.56 \; 21.90 \; 22.27]$ \$/MWH, respectively \cite{LMPs}. The searching time function  $T_j(\lambda_j)$ to capture congestion due to finding a charger within each city was selected as: $T_j(\lambda_j) = 0.4(\frac{\lambda_j}{x_j})^3$ (where $x_j$ is the capacity of the location).  
\textcolor{black}{
We chose the searching time function as $T_j(\lambda_j) = 0.4(\frac{\lambda_j}{x_j})^3$ because it outputs a cost close to 0 while the station still has available chargers, and then ramps up quickly once the arrival rate approaches the maximum capacity of the charging station ($x_j$), thus producing the effect that the users are spending more time trying to find an available charger. We note that this is not the only searching time function that can be used. This function can be tuned to change the effects of users waiting for available chargers at each location.
}
Last, we assume each EV user's charging demand takes a value from a uniform distribution from 0 to 80 KWH ($U[0,80]$). Note that all of these numbers are chosen for simplicity so that we can get results that can be easily parsed visually. \textcolor{black}{Last, we note that our case study formulation is convex and we made use of Matlab and CVX to solve for the equilibrium patterns. Each simulation took less than 0.5 seconds on a desktop computer with an Intel i7 processor and 16gb of RAM.} 

\textit{1) No plug-in fees:} We first show the traffic and charge patterns of the network without any plug-in fees to mitigate congestion (i.e., in this case we assume that the plug-in fees are equal to zero and the EV users pay the same grid price that the CNO pays for electricity). 

\begin{figure*}[t]
\centering
\includegraphics[width=\textwidth]{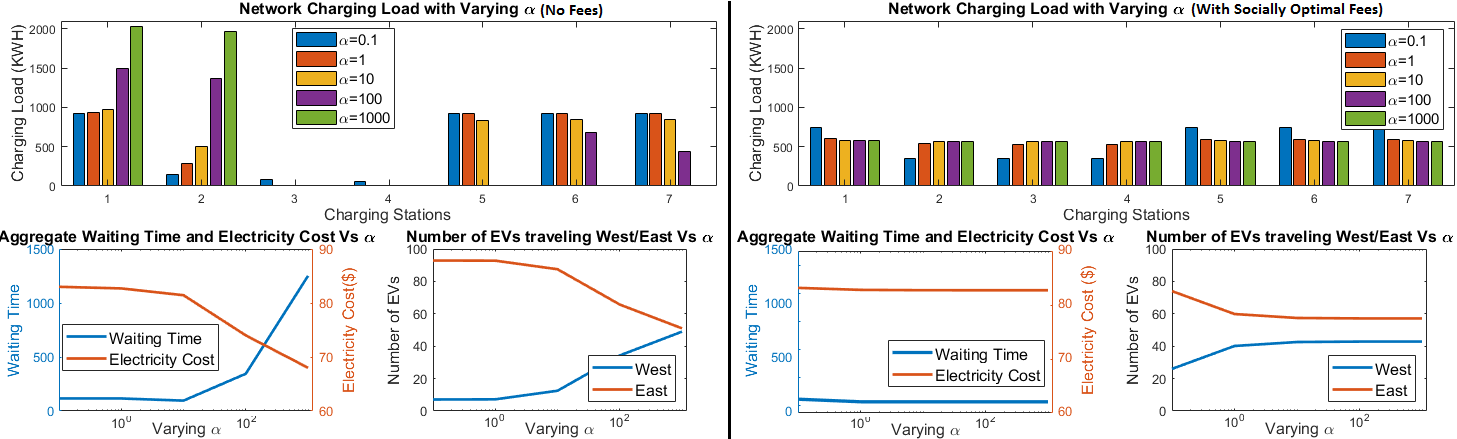} 
\caption{\textbf{Left:} (No plug-in fees) Counter-clockwise from top: 1) Expected electricity demand at the charging locations for varying values of $\alpha$, 2) Aggregate waiting time and electricity cost vs $\alpha$, 3) Traffic flow split between the east route and the west route. \textbf{Right:} Results for varying values of $\alpha$ with the socially optimal plug-in fees.}
\label{fig:wardrop_load}
\end{figure*}

In the left half of Figure \ref{fig:wardrop_load}, we present the charging load at each of the 7 locations in the network for varying values of $\alpha$. Recall, $\alpha$ is the time-value-of-money of the users in the network. When $\alpha$ is small, EV users are not strongly affected by the varying electricity prices in the network and they distribute themselves among the charging locations to minimize their waiting costs and transit times. However, as $\alpha$ increases the EV users begin to prioritize cheaper charging prices instead of shorter waiting times. In Figure \ref{fig:wardrop_load}, this is evident at locations 1 and 2 (Davis and Winters) which have the least expensive electricity prices.

At large values of $\alpha$, EV users are willing to endure longer waiting times to avoid expensive electricity costs. The bottom left plot in Figure \ref{fig:wardrop_load} shows the effects of varying $\alpha$ on the network's total electricity cost and total wait time. As shown in this plot, large values of $\alpha$ rapidly increase wait times as the population congests the cheapest locations. The plot to the right shows how the EVs split between the west route and the east route as $\alpha$ varies (note that the east route is 5 minutes quicker than the west route, but at large values of $\alpha$, the split of traffic flow is equal).

\textit{2) Socially optimal plug-in fees and electricity prices:} We now show the traffic and charge patterns of the network with the socially optimal plug-in fees as calculated in \eqref{toll.final}. \textcolor{black}{The socially optimal plug-in fees when $\alpha=1$ were: $[4.06 \; 2.90 \; 2.86 \; 2.85 \; 4.01 \; 4.02 \; 4.01]\$$ and the socially optimal plug-in fees when $\alpha=10$ were: $[3.58 \; 3.46 \; 3.42 \; 3.41 \; 3.53 \; 3.54 \; 3.54]\$$ for each of the locations: Davis, Winters, Vallejo, South San Francisco, San Jose, Concord, and Fremont, respectively. As you can see in the right half of Figure \ref{fig:wardrop_load} from the plug-in fees for the $\alpha=1$ case for locations 2, 3, and 4 (Winters $2.90\$$, Vallejo $2.86\$$, South San Francisco $2.85\$$), the plug-in fees are significantly lower than the other locations to influence more users to charge their EVs at these locations.}

As shown in the right half of Figure \ref{fig:wardrop_load}, the inclusion of socially optimal plug-in fees results in the EV charging load being split nearly equally among all charging locations for all values of $\alpha$. Furthermore, as shown in the bottom left plot, the waiting time never spikes like it did at large values of $\alpha$ without the plug-in fees. The plug-in fees are able to successfully balance minimizing electricity cost and wait times at the locations.
\begin{figure}
\centering
\includegraphics[width=0.9\columnwidth]{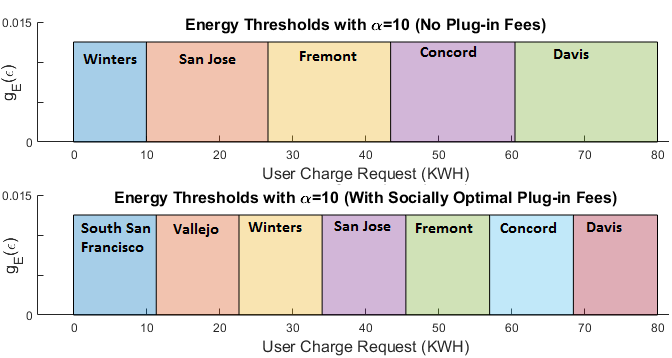} 
\caption{\textbf{Top:} Energy request partitions for EV users (no plug-in fees). \textbf{Bottom:} Energy request partitions for EV users (socially optimal plug-in fees).}
\label{fig:wardrop_energy_thresholds}
\end{figure}

In Figure \ref{fig:wardrop_energy_thresholds}, we present the energy thresholds for path selection for EV users traveling between Davis and San Jose with $\alpha=10$. The top half corresponds to the case where there are no plug-in fees and the bottom corresponds to the case with the socially-optimal plug-in fees in place. Figure \ref{fig:wardrop_energy_thresholds} shows that all the EV users with energy requests in the same shaded region will charge their EVs at the same location. Note that in the case where there are no plug-in fees (top graph), no EV will ever choose to charge in South San Francisco and Vallejo due to the 5 minute longer travel time and more expensive electricity prices. The inclusion of socially optimal plug-in fees alleviates this issue (bottom graph). Last, note that the EV users with the largest energy demands always choose to charge their EVs at the cheapest location, Davis (which matches the intuition we spoke about in Section \ref{section: network equilibrium model}).

\textcolor{black}{
\subsection{Bay Area Network (1 OD Pair) \textbf{with} Initial Energy Range Limitations}
\label{sec:yes range constraints}
}
\begin{figure}
\centering
\includegraphics[width=0.55\columnwidth]{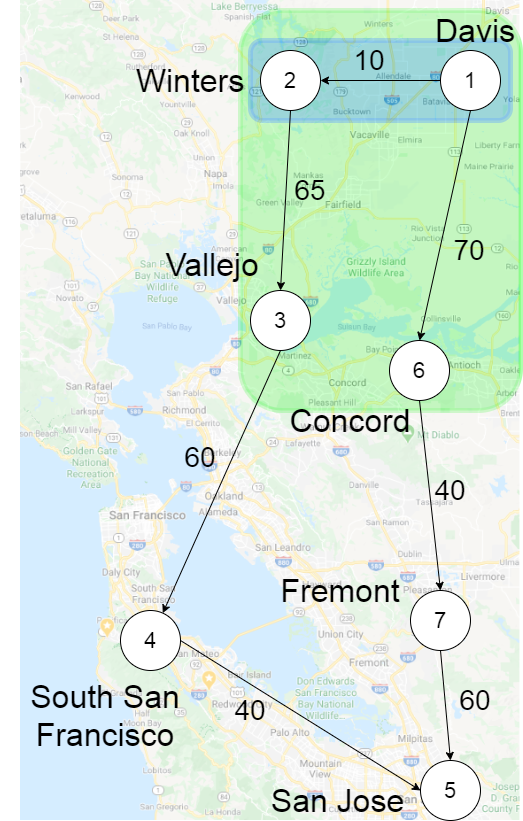} 
\caption{\textcolor{black}{Bay area map with arc travel times (minutes). The blue and green boxes represent the reachable charging stations by different groups of EVs due to their initial energy amount. Blue: EVs in this group must charge at Davis or Winters. Green: EVs in this group must charge at Davis, Winters, Vallejo, or Concord.}}
\label{fig:rangeconstrained}
\end{figure}

\textcolor{black}{
In this section, we present simulation results for the emergent equilibrium of the same Bay Area transportation network equipped with fast charging stations. However, in this case study we account for EVs with varying initial energy levels which restrict where some EVs must charge. Specifically, as shown in Figure \ref{fig:rangeconstrained}, we include 3 groups of EVs. Namely, those that have a high initial energy level (i.e., can charge at any of the 7 locations), EVs that have a medium initial energy level (i.e., can charge at any of the 4 locations in the green box), and EVs with very low initial energy (i.e., must charge at either Davis or Winters as represented by the blue box). We denote them as high initial energy, medium initial energy, and low initial energy EVs, respectively.
In the following, we simulated the population of 100 EVs traveling from Davis to San Jose with varying portions of the population split into the 3 groups to show how these users affect the resulting equilibrium with and without the socially optimal plug-in fees. The following simulations were ran with $\alpha=10$.
\begin{table*}
\begin{center}
\small
\textcolor{black}{
 \begin{tabular}{||c c c c c c c||} 
 \hline
 Case & High & Medium & Low &
 Optimal plug-in fees & Electricity Cost & Waiting Time (min)\\ [0.5ex] 
 \hline\hline
 1 & 100 & 0 & 0 & No & \$81.46 & 96.31 \\ 
 \hline
 2 & 100 & 0 & 0 & Yes & \$82.74 & 29.16 \\ 
 \hline
 3 & 0 & 0 & 100 & No & \$68.56 & 1252.10 \\ 
 \hline
 4 & 0 & 0 & 100 & Yes & \$68.56 & 1250.00 \\ 
 \hline
 5 & 0 & 100 & 0 & No & \$77.32 & 174.60 \\ 
 \hline
 6 & 0 & 100 & 0 & Yes & \$77.74 & 156.25 \\ 
 \hline
 7 & 50 & 25 & 25 & No & \$81.45 & 96.42 \\ 
 \hline
 8 & 50 & 25 & 25 & Yes & \$82.74 & 29.16 \\ 
 \hline
 9 & 25 & 25 & 50 & No & \$78.56 & 118.92 \\ 
 \hline
 10 & 25 & 25 & 50 & Yes & \$78.37 & 84.46 \\
 \hline
\end{tabular}
\caption{\textcolor{black}{Simulation results accounting for EV populations with various initial energy levels.}}
\label{table:1}
}
\end{center}
\end{table*}
}
\textcolor{black}{
In Table II, we list the number of EVs in each group, whether or not the simulation included the socially optimal plug-in fees, the total electricity cost across the network, and the total waiting time across all 7 locations. We would like to emphasize the results in the last column of Table II that lists the total waiting time in the network. In cases 1 and 2, we show the same results as in Section \ref{sec:no range constraints} where the inclusion of socially optimal plug-in fees reduces the total waiting time by 70\%. In cases 3 and 4, all 100 EVs in the population are in the low initial energy group. In this case, the Winters and Davis are highly congested and the socially optimal plug-in fees are only able to reduce the total waiting time by 0.17\%. In cases 5 and 6, the entire population is in the medium initial energy group, and the inclusion of socially optimal plug-in fees reduces the total waiting time by 10.5\%. Cases 7-10 split the population between all 3 groups. Cases 7 and 8 have a majority of the users in the high initial energy group while cases 9 and 10 have a majority of the users in the low initial energy group. The difference in total waiting time between case 7 and 8 is 70\% and the difference in total waiting time between case 9 and 10 is 29\%. As shown in these 10 cases, plug-in fees tend to be more effective (i.e., reduce total waiting time) in cases where the EV population has more charging options (i.e., plug-in fees work better when more users are able to travel to far away charging stations to alleviate congestion at nearby stations). This matches our intuition that plug-in fees will be more effective when the EV population is more flexible.
}

\subsection{Larger Bay Area Network (3 OD Pair) Example}
\label{sec:larger example}

\begin{figure*}
\centering
\includegraphics[width=1.75\columnwidth]{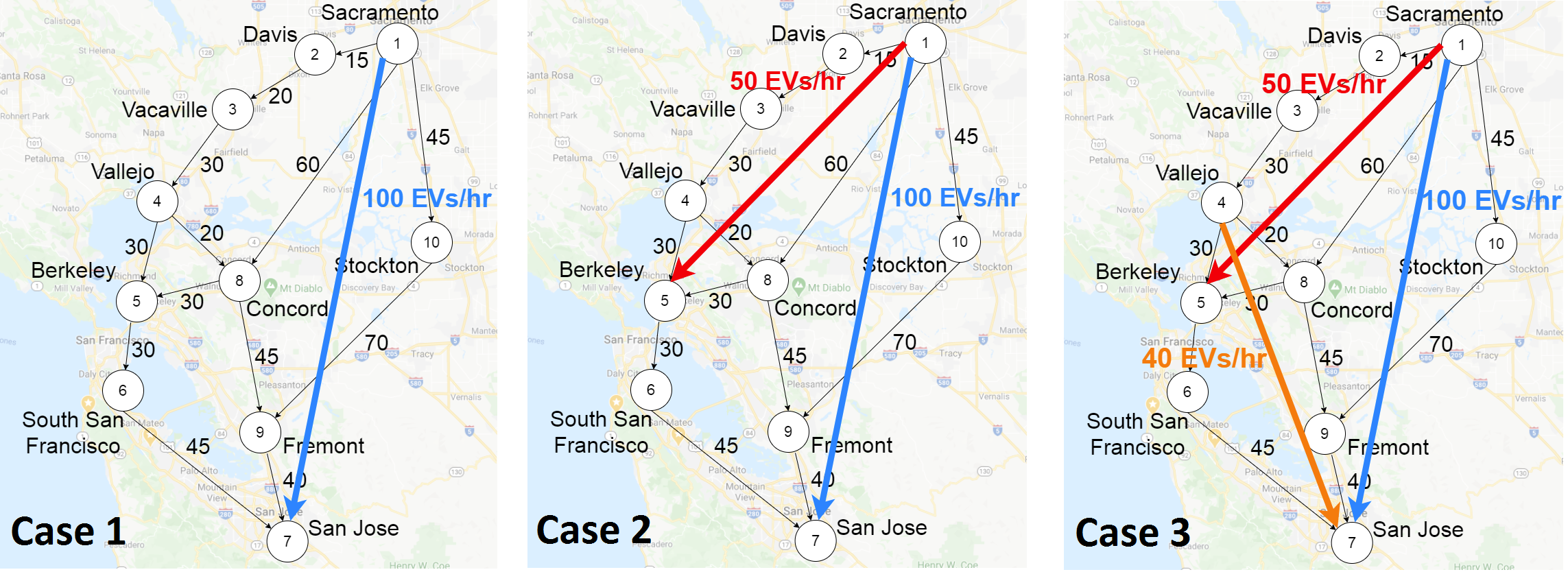} 
\caption{\textcolor{black}{Bay area map with arc travel times (minutes). Case 1 (Left): 1 Origin-Destination Pair with 100 EVs/hr traveling from Sacramento to San Jose. Case 2 (Middle): 2 Origin-Destination Pairs. Additional 50 EVs/hr traveling from Sacramento to Berkeley. Case 3 (Right): 3 Origin-Destination Pairs. Additional 40 EVs/hr traveling from Vallejo to San Jose.}}
\label{fig:bigger3casesintro}
\end{figure*}
\textcolor{black}{In this section, we present simulation results for the emergent equilibrium of a larger transportation network with multiple origin-destination pairs. As shown in Figure \ref{fig:bigger3casesintro}, we consider 3 cases. The first case considers 100 EVs/hr traveling from Sacramento to San Jose. The second case considers the 100 Sacramento-San Jose EVs/hr and adds 50 EVs/hr traveling from Sacramento to Berkeley. In Case 3, we consider the 100 Sacramento-San Jose EVs, the 50 Sacramento-Berkeley EVs, and add 40 EVs/hr traveling from Vallejo to San-Jose. \\
\indent We make use of a larger transportation network (13 arcs, 10 nodes) modeled after the San Francisco Bay Area region in Northern California as shown in Figure \ref{fig:bigger3casesintro}. Each of the 10 cities has fast charging available for the EVs with a maximum capacity of 10 EVs and all are assumed to have the same charging rate \textcolor{black}{(50kW DC, CHAdeMO standard)}. 
This network is larger than the previous, and there are 62 feasible travel paths within the entire network (instead of 9). Furthermore, we assume that the electricity prices the CNO pays the electricity retailer are the day-ahead locational marginal prices (LMPs) for each of the locations: Sacramento, Davis, Vacaville, Vallejo, Berkeley, South San Francisco, San Jose, Concord, Fremont, and Stockton. The electricity prices are: $[38.33 \; 35.80 \; 38.43 \; 39.08 \; 38.82 \; 39.53 \; 40.23 \; 38.63 \; 39.48 \; \\33.69]$ \$/MWH, respectively \cite{LMPs}. The searching time function  $T_j(\lambda_j)$ to capture congestion due to finding a charger within each city was selected as: $T_j(\lambda_j) = 0.4(\frac{\lambda_j}{x_j})^3$ (where $x_j$ is the capacity of the location) which is the same as used in the previous examples.  
Last, we assume each EV user's charging demand takes a value from a uniform distribution from 0 to 80 KWH ($U[0,80]$). 
}
\begin{figure*}
\centering
\includegraphics[width=1.75\columnwidth]{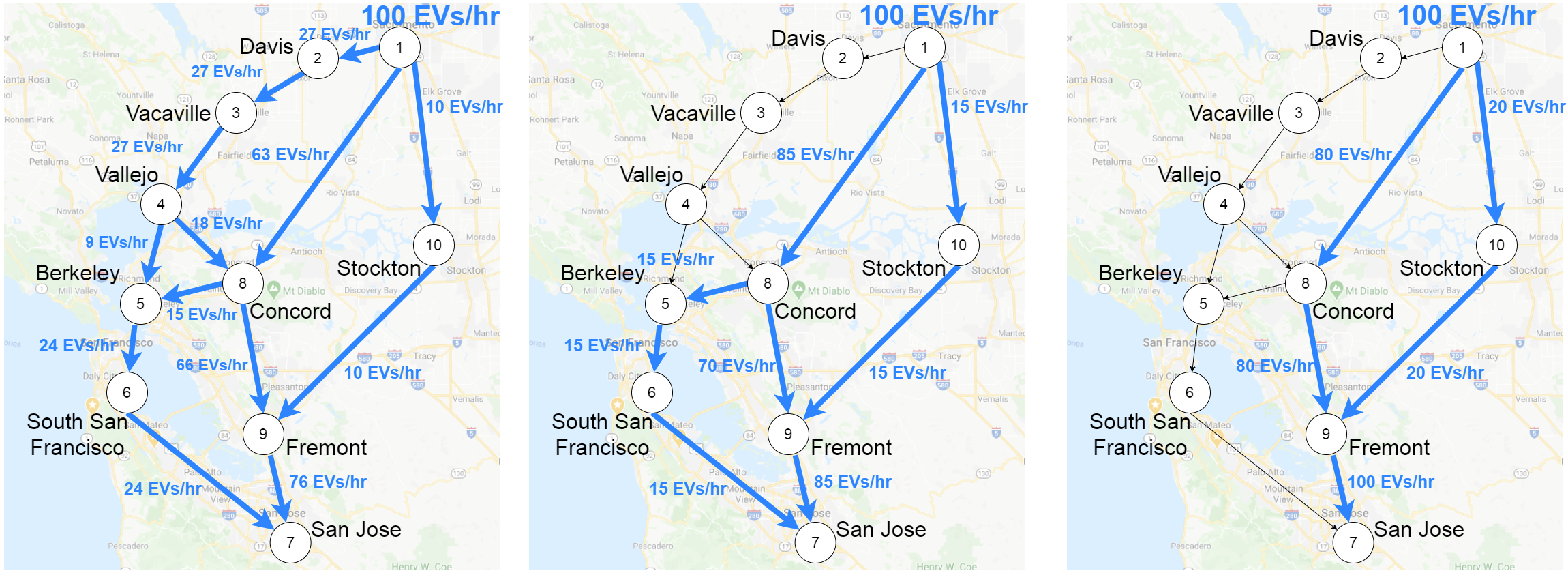} 
\caption{\textcolor{black}{Visual representation of the traffic flow of the Sacramento-San Jose OD Pair. Case 1 (Left): Sacramento-San Jose OD Pair Only, flow on all roads. Case 2 (Middle): Additional 50 EVs/hr traveling from Sacramento to Berkeley, Sacramento-San Jose travelers no longer travel through Davis, Vacaville, or Vallejo. Case 3 (Right): Additional 40 EVs/hr traveling from Vallejo to San Jose, Sacramento-San Jose travelers no longer travel through Berkeley or South San Francisco.}}
\label{fig:bigger3casesresults}
\end{figure*}

\textcolor{black}{
In Figure \ref{fig:bigger3casesresults}, we present the traffic flow of the 100 EVs in the Sacramento-San Jose OD pair for each of the 3 cases with $\alpha=25$ and with the inclusion of the socially optimal plug-in fees to mitigate congestion. As shown in Case 1 (Left), the Sacramento-San Jose EVs are able to utilize all roads and stations to travel to San Jose. In Case 2 (middle), a flow of 50 EVs/hr is added from Sacramento to Berkeley which causes congestion in Davis, Vacaville, and Vallejo. As a result, all the Sacramento-San Jose EVs travel and charge without visiting Davis, Vacaville, or Vallejo. Similarly, with the addition of 40 EVs/hr traveling from Vallejo to San Jose in Case 3 (right), there is congestion at Berkeley and South San Francisco. As a result, the Sacramento-San Jose EVs instead select routes farther east (through Concord and Stockton) to avoid congestion in South San Francisco.
}
\begin{figure}
\centering
\includegraphics[width=0.85\columnwidth]{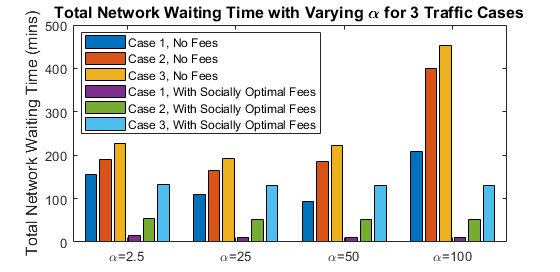}
\caption{\textcolor{black}{Total Network Waiting time at charging stations for each of the 3 cases for various values of $\alpha$ (with and without the socially optimal plug-in fees)}. }
\label{fig: mult cases waiting time results}
\end{figure}

\textcolor{black}{
Last, in Figure \ref{fig: mult cases waiting time results} we present the total network waiting time at the charging stations for various values of $\alpha$ with and without the socially optimal plug-in fees. For these results, we used EV populations from case 3, with all 3 OD pairs (190 EVs/hr traveling). As shown in Figure \ref{fig: mult cases waiting time results}, the addition of the socially optimal plug-in fees significantly reduces waiting time for all values of $\alpha$. Furthermore, the socially-optimal plug-in fees are able to induce the same minimal waiting time no matter the value of $\alpha$. From these results and the waiting time results in Section \ref{sec:yes range constraints}, it is clear that utilizing plug-in fees can significantly reduce congestion at charging stations for 1 OD pair networks and multiple OD pair networks.
}


\section{Conclusion}
\textcolor{black}{
In this paper, we studied the traffic patterns as well as the charging patterns of a population of cost-minimizing EV owners traveling and charging within a transportation network equipped with fast charging stations. We studied how the charging network operator can influence where EV users charge in order to optimize the utilization of fast charging stations. Specifically, we were able to capture the resulting equilibrium wait times and electricity load  by solving a \textit{traffic and charge assignment problem} for the fast charging station network. Our problem formulation allowed us to: 1) Study the expected station wait times as well as the probability distribution of aggregate charging load  of EVs in a FCS network in a mobility-aware fashion (an aspect unique to our work), while accounting for heterogeneities in users' travel patterns, energy demands, and geographically variant electricity prices (which would be applicable to real-world charging networks for both optimizing the usage of charging stations or for planning new stations). 2) Characterize the special threshold-based structure that determines how EV owners choose where to charge their vehicle at equilibrium, in response to the FCS's charging price structure, users' energy demands, and users' mobility patterns. As shown in Figure \ref{fig:wardrop_energy_thresholds}, we were able split up the EV population based on their energy requests and predict where each EV user will charge their vehicle. This allows the CNO to tune their plug-in fees until a desired split is achieved in the EV population. Unlike previous works, we did not need to discretize the charge requests of EVs in order to determine the equilibrium charging behavior. 3) Provide a convex optimization problem formulation to solve for the network's unique equilibrium. Furthermore, we illustrated how to induce a socially optimal charging behavior by deriving the socially optimal plug-in fees and electricity prices at the charging stations (as shown in the right half of Figure \ref{fig:wardrop_load} where the charging load is split evenly among the available stations). In future work, we hope to further examine heterogeneity in EV battery capacities that limit how far EVs can travel to charge as well as investigate the effects of home and workplace charging on the traffic and charge patterns in a FCS network.
}


\bibliographystyle{IEEEtran}
\bibliography{references}

\clearpage
\begin{center}
\textsc{Supplementary Material}

\end{center}
\textcolor{black}{
\subsection{Notation}
\begin{flalign}
    \mathcal{G}\quad &\text{Transportation network graph} \nonumber \\
    \mathcal{V}\quad &\text{Set of locations/nodes } \nonumber \\
    \mathcal{J}\quad &\text{Set of locations/nodes with chargers} \nonumber \\
    \mathcal{A}\quad &\text{Set of roads/arcs } \nonumber \\
    (o,d)\in\mathcal{O}\quad &\text{Set of origin/destination pairs } \nonumber \\
    \alpha\quad &\text{Time value of money parameter (minutes/\$) } \nonumber \\
    \epsilon\quad &\text{Energy request (KWh) } \nonumber \\
    q_{od}\quad &\text{EV arrival rate with origin/destination o,d} \nonumber \\
    \mathcal{E}\quad &\text{Energy request random variable } \nonumber \\
    g_{\mathcal{E}}^{od}\quad &\text{Energy request distribution for pair o,d } \nonumber \\
    \epsilon_{min}\quad &\text{Minimum energy request } \nonumber \\
    \epsilon_{max}\quad &\text{Maximum energy request} \nonumber \\
    \lambda_j\quad &\text{EV arrival rate at station j } \nonumber \\
    \gamma\quad &\text{Charging rate inverse (mins/kwh) } \nonumber \\
    T_j(\lambda_j)\quad &\text{Waiting time function at station j (minutes) } \nonumber \\
    p\in\mathcal{P}\quad &\text{Set of arcs/nodes of all paths } \nonumber \\
    \mathcal{P}_{od}\quad &\text{Feasible paths for pair o,d } \nonumber \\
    K_{od}\quad &|\mathcal{P}_{od}| \nonumber \\
    t_a\quad &\text{Travel time on arc/road }a \text{(minutes)}\nonumber \\
    \ell_{p_i}(\lambda)\quad &\text{Latency of path }p_i \text{ under traffic flow }\lambda \nonumber \\
    \delta\quad &\text{Indicator variable } \nonumber \\
    \tau_j\quad &\text{Plug-in fee at station j } \nonumber \\
    v_j\quad &\text{Price per kwh of electricity at station j } \nonumber \\
    f_{od}^i\quad &\text{User flow with pair o,d on path i (EVs/hr)} \nonumber \\
    x_a\quad &\text{Total flow on arc a (EVs/hr) } \nonumber \\
    \theta_{od}^i\quad &\text{Price of electricity on path i } \nonumber \\
    \pi_{od}\quad &\text{Energy threshold }\pi_{od}\in[\epsilon_{min},\epsilon_{max}] \nonumber \\
    C_{od}^{p_i}\quad &\text{Total cost of traveling on path i } \nonumber \\
    U_j(\cdot)\quad &\text{Expected total energy consumption at node j } \nonumber \\
    D_j(\cdot)\quad &\text{Cost of procuring electricity at station j } \nonumber \\
    G(\epsilon)\quad & \int_{0}^{\epsilon}g_{\mathcal{E}}(x) dx, \text{ Energy Demand CDF } \nonumber \\
    G^{-1}(t)\quad & \text{SUP}\{ \epsilon; G(\epsilon) < t\}, \text{ the inverse of CDF, } G(\epsilon) \nonumber \\
    E(x)\quad & \int_{0}^{x} G^{-1}(t) dt \nonumber \\
    Q_{od}^{i}\quad & \sum_{j=1}^{i} f_{od}^{j}, \text{ sum of flows on paths } 1,\dots,i \nonumber
\end{flalign}
}

\subsection{Proof of Lemma \ref{lemma: firstprinc}}
\begin{proof}
For any user traveling on path $i$, we know that at equilibrium, we must have $ \alpha \theta_{od}^{i} \epsilon + l_{p_i}(\lambda) + \alpha \sum_{j \in J}\delta_{ji}\tau_{j} \leq \alpha \theta_{od}^{k} \epsilon + l_{p_k}(\lambda) + \alpha \sum_{j \in J}\delta_{jk}\tau_{j}$. On the other hand, for any user traveling on path $k$, $\alpha \theta_{od}^{i} \epsilon' + l_{p_i}(\lambda) + \alpha \sum_{j \in J}\delta_{jk}\tau_{j} \geq \alpha \theta_{od}^{k} \epsilon' + l_{p_k}(\lambda) + \alpha \sum_{j \in J}\delta_{jk}\tau_{j}$. Accordingly, as we know that $ \theta_{od}^{i} \geq \theta_{od}^{k}$, we must have $\epsilon - \epsilon' 
\leq 0$.
\end{proof}

\subsection{Proof of Theorem \ref{theorem: energy thresholds}}

\begin{proof}
First, we prove that if the specific threshold-based  traffic pattern specified by \eqref{thresholds} is established, we have equilibrium. We prove this by contradiction. { Suppose the given flow pattern is not an equilibrium}. This means that there will exist a user with energy demand $\epsilon'$   assigned to path $p_{i+1}$ ($\pi_{od}^{i} \leq \epsilon'$ ) who has incentive to change his or her path to $p_k$ (without loss of generality we assume $k < i+1$). We use induction on $k$ to show  contradiction.
For the user to prefer path $k=i$, we must have ${C}_{od}^{p_{i}}(\epsilon') < {C}_{od}^{p_{i+1}}(\epsilon')$, i.e.,
\begin{align}
& \psi_{od}^{i} + \alpha \theta_{od}^{i}\epsilon' < \psi_{od}^{i+1} + \alpha \theta_{od}^{i+1}\epsilon' ~~\Rightarrow \nonumber\\&\epsilon' < \frac{\psi_{od}^{i+1} - \psi_{od}^{i} }{\alpha (\theta_{od}^{i} - \theta_{od}^{i+1})}  = \pi_{od}^{i}\end{align} which is contradictory to  $\pi_{od}^{i} \leq \epsilon'$.
Now, let us assume that the claim holds for all paths  $p_{k+1},p_{k+2},\ldots,p_i$, which  mean that we must have
 \begin{align}\label{i+1toj+1}
 & \frac{\psi_{od}^{i+1} - \psi_{od}^{k+1} }{\alpha (\theta_{od}^{k+1} - \theta_{od}^{i+1})} < \frac{\psi_{od}^{i+1} - \psi_{od}^{i} }{\alpha (\theta_{od}^{i} - \theta_{od}^{i+1})} ~~\Rightarrow \\& 
\alpha \theta_{od}^{i}(\psi_{od}^{i+1} - \psi_{od}^{k+1}) <  \bigg( \alpha \theta_{od}^{k+1}(\psi_{od}^{i+1} - \psi_{od}^{i}) - \nonumber \\& ~~~~~~~~~~~~~~~~~~~~~~~~~~~~~ \alpha \theta_{od}^{i+1}(\psi_{od}^{k+1} - \psi_{od}^{i})\bigg).
\end{align}
This means that for used path $p_k$, we should have:
\begin{align}\label{pathpk}
  \pi_{od}^{i} \leq \epsilon' <  \frac{\psi_{od}^{i+1} - \psi_{od}^{k} }{\alpha (\theta_{od}^{k} - \theta_{od}^{i+1})}.
\end{align} However, given that $\pi_{od}^{k} \leq \pi_{od}^{i}$ by definition, we have:
\begin{align}
    &\frac{\psi_{od}^{k+1} - \psi_{od}^{k} }{\alpha (\theta_{od}^{k} - \theta_{od}^{k+1})} < \frac{\psi_{od}^{i+1} - \psi_{od}^{i} }{\alpha (\theta_{od}^{i} - \theta_{od}^{i+1})} ~~\Rightarrow \nonumber \\&  
    \alpha \theta_{od}^{i}(\psi_{od}^{k+1} - \psi_{od}^{k}) - \alpha \theta_{od}^{i+1}(\psi_{od}^{i+1} - \psi_{od}^{k}) - \nonumber\\& \alpha \theta_{od}^{i+1}(\psi_{od}^{k+1} - \psi_{od}^{i}) + \alpha \theta_{od}^{k+1}(\psi_{od}^{i+1} - \psi_{od}^{i}) < \nonumber\\& \alpha (\theta_{od}^{k} - \theta_{od}^{i+1}) (\psi_{od}^{i+1} - \psi_{od}^{i}).
\end{align} If we use \eqref{i+1toj+1}, we can write: \begin{align}
    \frac{\psi_{od}^{i+1} - \psi_{od}^{k} }{\alpha (\theta_{od}^{k} - \theta_{od}^{i+1})} < \frac{\psi_{od}^{i+1} - \psi_{od}^{i} }{\alpha (\theta_{od}^{i} - \theta_{od}^{i+1})} = \pi_{od}^{i}
\end{align} which is contradictory to \eqref{pathpk}. Hence  no user will have incentive to change his or her   path  if the specific threshold-based traffic pattern by \eqref{thresholds} is established.  

  To prove the reverse, we can see that it easily follows from Lemma \ref{lemma: firstprinc} that only threshold based flows can exist at equilibrium. Hence, we only need to prove that no other threshold based flow except for that described by \eqref{thresholds} can be an equilibrium pattern. We prove this by contradiction. Suppose, there exists  another set of thresholds $(\beta_{od}^{0},\dots,\beta_{od}^{i},\dots,\beta_{od}^{K})$ which is an equilibrium pattern. Therefore, users with energy demand $\epsilon \leq \beta_{od}^{i}$ will choose path $p_i$, and no user has any incentive to change their path to path $p_{i+1}$. Hence, $\psi_{od}^{i} + \alpha \theta_{od}^i { \epsilon} \leq \psi_{od}^{i+1} + \alpha \theta_{od}^{i+1}  \epsilon$ that leads to $\epsilon \leq \pi_{od}^{i}$ that means $\beta_{od}^{i} = \pi_{od}^{i}$ which is contradictory to existence of another set of thresholds that can be equilibrium pattern.\end{proof}

\subsection{Proof of Theorem \ref{thr.UE}}

\begin{proof} We first state a lemma that slightly reformulates the equilibrium conditions. \begin{lemma}\label{Lemma.UE}
    The flow $f$ is a user equilibrium traffic pattern if and only if, it satisfies the following conditions:
\begin{enumerate}
    \item $\forall (o,d), j: f_{od}^{j} \geq 0,$
    \item for every OD pair, for every utilized path i,l:
        \begin{align} \label{Ue.COND2}
         & \psi_{od}^{i} + \sum_{k = i}^{K_{od} -1}  G^{-1}(\frac{Q_{od}^{k}}{q_{od}}) \alpha \bigg(  \theta_{od}^{k} - \theta_{od}^{k+1} \bigg)  \nonumber \\& = \psi_{od}^{l} +  \sum_{k = l}^{K_{od} -1}  G^{-1}(\frac{Q_{od}^{k}}{q_{od}}) \alpha \bigg(  \theta_{od}^{k} - \theta_{od}^{k+1} \bigg).
        \end{align} 
\end{enumerate} 
\end{lemma}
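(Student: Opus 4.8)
The plan is to prove both directions by leveraging the threshold characterization of Theorem~\ref{theorem: energy thresholds}, using as a bridge the flow-to-threshold identity $\pi^i_{od} = G^{-1}(Q^i_{od}/q_{od})$. I would first establish this identity by integrating the density relation \eqref{thresholdtoflow}: summing $f^j_{od} = q_{od}\bigl(G(\pi^j_{od}) - G(\pi^{j-1}_{od})\bigr)$ over $j \le i$ telescopes to $Q^i_{od} = q_{od}\bigl(G(\pi^i_{od}) - G(\pi^0_{od})\bigr)$, and since $\pi^0_{od} = \epsilon_{\min}$ forces $G(\pi^0_{od}) = 0$, we obtain $G(\pi^i_{od}) = Q^i_{od}/q_{od}$. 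Condition~(1) (nonnegativity) is part of feasibility and is simply carried along, so the substance is the equivalence between being an equilibrium and the cost-equalization identity \eqref{Ue.COND2}.

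For the forward direction, assume $f$ is a user equilibrium and denote by $\Psi^i_{od}$ the left-hand side of \eqref{Ue.COND2}. The goal is to show $\Psi^i_{od} = \Psi^l_{od}$ for every pair of utilized paths via a summation-by-parts argument over consecutive utilized paths. For utilized paths $p_i$ and $p_l$ that are consecutive in the utilized-path ordering (all intermediate indices unused), I would note that an unused path $p_k$ has $f^k_{od} = 0$, hence $Q^k_{od} = Q^i_{od}$ and $G^{-1}(Q^k_{od}/q_{od}) = \pi^i_{od}$ stays constant across the gap; the intermediate price differences then telescope to $\alpha\,\pi^i_{od}(\theta^i_{od} - \theta^l_{od})$. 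Substituting the equilibrium indifference at the boundary energy $\pi^i_{od}$, namely $C^{p_i}_{od}(\pi^i_{od}) = C^{p_l}_{od}(\pi^i_{od})$ (Corollary~\ref{corollary: equal cost}), which rearranges to $\alpha\,\pi^i_{od}(\theta^i_{od} - \theta^l_{od}) = \psi^l_{od} - \psi^i_{od}$, makes $\Psi^i_{od} - \Psi^l_{od}$ collapse to zero; transitivity across consecutive utilized pairs then yields \eqref{Ue.COND2}.

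For the reverse direction, assume $f \ge 0$ satisfies \eqref{Ue.COND2}. Applying the identity to consecutive utilized paths and running the same telescoping in reverse recovers, at the boundary energy $\pi^i_{od} = G^{-1}(Q^i_{od}/q_{od})$, the indifference relation that rearranges into the threshold-defining equation \eqref{thresholds} of Theorem~\ref{theorem: energy thresholds}. Since that theorem shows any flow obeying \eqref{thresholds} is an equilibrium, the claim follows.

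I expect the main obstacle to be the bookkeeping of unused paths in the telescoping sum: condition \eqref{Ue.COND2} quantifies only over utilized paths, whereas the summation index $k$ ranges over all paths. I must verify that $G^{-1}(Q^k_{od}/q_{od})$ is constant across each run of unused paths and that the associated price differences telescope cleanly, so that the purely local indifference conditions between consecutive utilized paths suffice to force global cost equalization. Checking that the boundary cases ($\pi^0_{od} = \epsilon_{\min}$ and the topmost utilized path, whose sum in \eqref{Ue.COND2} is empty) are handled consistently is the remaining routine detail.
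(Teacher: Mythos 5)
Your proof is correct and takes essentially the same route as the paper's: both hinge on the identity $\pi^i_{od}=G^{-1}(Q^i_{od}/q_{od})$ obtained by telescoping \eqref{thresholdtoflow}, combined with the threshold/indifference relation \eqref{thresholds}, to telescope the price-difference sum and equalize the quantities $\Psi^i_{od}$ across utilized paths. Your version is somewhat more explicit than the paper's (which telescopes directly to $\Psi^{K_{od}}_{od}$ and leaves the unused-path bookkeeping and the reverse direction implicit), but it is the same argument.
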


\begin{proof}
Proof of the first condition is trivial.  For the second condition, we know from \eqref{thresholds} that 
\begin{align}
    \frac{{ \psi}_{od}^{i+1} - { \psi}_{od}^{i}}{ \alpha \bigg(  \theta_{od}^{i} - \theta_{od}^{i+1} \bigg)} =  G^{-1}(\frac{Q_{od}^{i}}{q_{od}} ).
\end{align} Therefore, for every utilized path $i$ in pair (o,d): 
\begin{align}
    & \Psi_{od}^{i} +  \sum_{k = i}^{K_{od} -1}  G^{-1}(\frac{Q_{od}^{k}}{q_{od}}) \alpha \bigg(  \theta_{od}^{k} - \theta_{od}^{k+1} \bigg) = \Psi_{od}^{K_{od}}, 
\end{align} which proves the second condition.
\end{proof}
Let us now write the Lagrangian of  \eqref{eq:UE.problem} as follows:
  \begin{equation}\label{Lagranigan}
  \begin{split}
      & L_{eq}(f_{od}^{i}, Q_{od}^{i}, q_{od}, v_{od}^{i}, z_{od}) = \\& \sum_{a \in A} x_a t_a + \alpha \sum_{j \in J} \lambda_j \tau_j + \sum_{j \in J} \int_{0}^{\lambda_j} T_{j}(x) dx  
	 + \\& \alpha \sum_{(o,d) \in \mathcal O}q_{od} \left( \sum_{i = 1}^{K_{od}} \theta_{od}^{i} \left[ E(\frac{Q_{od}^{i}}{q_{od}}) - E(\frac{Q_{od}^{i-1}}{q_{od}}) \right] \right) + \\& \sum_{(o,d) \in \mathcal O}\sum_{i = 1}^{K_{od}} v_{od}^{i} \left( Q_{od}^{i} - \sum_{k\leq i} f_{od}^{k} \right) \!+\!\! \sum_{(o,d) \in \mathcal O}\!\! z_{od} \left( q_{od} - \sum_{i = 1}^{K_{od}} f_{od}^{i}   \right) 
	\end{split}
  \end{equation} Partial derivative of \eqref{Lagranigan} with respect to $f_{od}^{i}$ and $Q_{od}^{i}$ are as follows: 
  \begin{align}
      &\frac{\partial L_{eq}}{ \partial f_{od}^{i}} = \Psi_{od}^{i} - \sum_{ k = i}^{K_{od} -1} v_{od}^{k} - z_{od} \label{dr.L.f} \\&
      \frac{\partial L_{eq}}{\partial Q_{od}^{i}} =  \alpha \bigg(  \theta_{od}^{i} - \theta_{od}^{i+1} \bigg) G^{-1}(\frac{Q_{od}^{i}}{q_{od}}) + v_{od}^{i} \label{dr.L.Q} 
  \end{align}
  In equilibrium, for every two utilized path $i,l$ we must have:  \begin{align}
  &\frac{\partial L_{eq}}{\partial Q_{od}^{i}} =  \frac{\partial L_{eq}}{\partial Q_{od}^{l}} = 0,\\ &\frac{\partial L_{eq}}{ \partial f_{od}^{i}} = \frac{\partial L_{eq}}{ \partial f_{od}^{l}} = 0,  \end{align} which are equivalent to the conditions put forth in Lemma \eqref{Lemma.UE}, proving that the solution of \eqref{eq:UE.problem} is the user equilibrium traffic pattern of network.  
\end{proof}

\subsection{Proof of Theorem \ref{uniq.th.eq.}}
\begin{proof}
 Due to the strict convexity assumption on $T_j(\lambda_j)$, the term $ \sum_{a \in A} x_a t_a + \alpha \sum_{j \in J} \lambda_j \tau_j + \sum_{j \in J} \int_{0}^{\lambda_j} T_{j}(x) dx$ is continuously differentiable and strictly convex.
As $ \alpha \sum_{(o,d) \in \mathcal O}  q_{od} \left( \sum_{k = 1}^{K_{od}} \theta_{od}^{k} \left[ E(\frac{Q_{od}^{k}}{q_{od}}) - E(\frac{Q_{od}^{k-1}}{q_{od}}) \right] \right)$  is continuously differentiable, the strict convexity of the objective function hinges on  the convexity of the following term: 
\begin{align}\label{conv.UE3}
    \sum_{(o,d) \in \mathcal O} \left( q_{od} \sum_{k = 1}^{K_{od}} \theta_{od}^{k} \left[ E(\frac{Q_{od}^{k}}{q_{od}}) - E(\frac{Q_{od}^{k-1}}{q_{od}}) \right] \right),
\end{align} where we define $Q_{od}^{0} = 0$ and $Q_{od}^{K_{od}} = 1$. We prove the convexity of \eqref{conv.UE3} over each single O-D pair, i.e., we consider the convexity of functions of the following form:
\begin{align}
      a_{od}(f) = & q \sum_{k = 1}^{K} \theta^{k} \left[ E(\frac{Q^{k}}{q}) - E(\frac{Q^{k-1}}{q}) \right]. \label{Conv.UE.2}
\end{align}
We can calculate the first order derivative of \eqref{Conv.UE.2} with respect to $f^{i}$ as follows:
\begin{align}
    \frac{\partial a_{od}(f)}{\partial f^{i}} & = \sum_{k = 1}^{K-1} \frac{\partial a_{od}(f)}{\partial Q^{k}}\cdot \frac{\partial Q^{k}}{\partial f^{i}}  \\& =  
       \sum_{k = 1}^{K-1} (\theta^{k} -  \theta^{k+1}) G^{-1}(\frac{Q^{k}}{q})\Gamma_{ki}  ,\label{firs.ord.dr.f^{i}}
   \end{align} where  $\frac{\partial Q^{k}}{\partial f^{i}} = \Gamma_{ki}$, with $\Gamma_{ki}$ given by:
   \[
    \Gamma_{ki} = \left\{\begin{array}{lr}
        1 &  i \leq k,\\
        
        0 & \mbox{else}.
        \end{array}\right.
  \] Then, we calculate the second order derivative of \eqref{Conv.UE.2}:
  \begin{align}\label{sec.ord.dr.f^{i}}
      \frac{\partial ^2}{\partial f^{i} \partial f^{l}} a_{od}(f) = \sum_{k = 1}^{K-1} \bigg[ & (\theta^{k} - \theta^{k+1})  (\frac{y_k (f)}{q})\Gamma_{ki} \Gamma_{kl}\bigg],
  \end{align}
  where $
      y_{i} (f) = \frac{\partial G^{-1} (Q^{i} /q)}{\partial Q^i}$.
  For proving convexity, we will show the Hessian of \eqref{Conv.UE.2} is non-negative with respect to $f$, i.e., we consider the non-negativity of:
  \begin{align}
    & \sum_{i,l}  \frac{\partial ^2}{\partial f^{i} f^{l}} a_{od}(f) h^{i} h^{l} = \sum_{i,l}\sum_{k = 1}^{K-1}  \bigg[  (\theta^{k} - \theta^{k+1}) \nonumber\\&  (\frac{y_k (f)}{q})   \Gamma_{ki} \Gamma_{kl}h^{i} h^{l} \bigg] = \nonumber\\& \sum_{k = 1}^{K-1} \bigg[  (\theta^{k} - \theta^{k+1})  (\frac{y_k (f)}{q}) \bigg( \sum_{i,l} \Gamma_{ki}  \Gamma_{kl}h^{i} h^{l}\bigg)\bigg] = \nonumber \\& \sum_{k = 1}^{K-1} \bigg[  (\theta^{k} - \theta^{k+1})  (\frac{y_k (f)}{q})\bigg(\sum_{i}\Gamma_{kl}h^{i}\bigg)^2\bigg].
  \end{align} The coefficients $(\theta^{k} - \theta^{k+1})  (y_k (f)/q)$ are non-negative, since $y_{i} (f) = \frac{\partial G^{-1} (Q^{i} /q)}{\partial Q^i}$ is the derivative of the inverse of an increasing function (the cumulative density function) and prices are in decreasing order ($\theta^{k} \geq \theta^{k+1}$), therefore, those coefficients are non-decreasing. The non-negativeness of Hessian of \eqref{Conv.UE.2} with respect to $f^{i}$ proves the convexity of \eqref{Conv.UE.2}. Summing over all O-D pairs will prove the convexity of \eqref{conv.UE3}. Therefore, the optimization problem \eqref{eq:UE.problem} is strictly convex, and hence has a unique solution. Thus, the equilibrium pattern as well as energy thresholds are unique. 
\end{proof}

\subsection{Modeling Charging Stations as Queues}
\label{subsection: queues}
Here we discuss the implications of various modeling choices for the wait time $T_j(\lambda_j)$ at each charging station. The expected wait time can naturally be mathematically modeled based on queueing theory. For example, in the simplest case, the expected time spent $T_j(\lambda_j)$ waiting to enter charging station $j$ could be set equal to:
\begin{equation}\label{queue}
    T_j(\lambda_j) = \frac{\lambda_j/c\mu}{1-(\lambda_j/c\mu)}\cdot \frac{1}{c\mu},  
\end{equation}
where $\mu$ is the expected service time of each user in the system and $c$ is the capacity of the charging station. However, the key challenge with queueing models stems from how to define $\mu$.  We saw that in order for our uniqueness result in Theorem \ref{uniq.th.eq.}   to hold, we need $T_j(\lambda_j)$ to be a uni-variate, strictly increasing, convex function of $\lambda_j$. In this case, we would have to assume that the expected service time, $\mu$, at the station is a constant. This means that the expected amount of energy required by the users at each station, which is an endogenous parameter in our model and is a function of users' charging decisions, should not have a direct effect on their average service times in the queueing system. We noted that while this is an appropriate assumption when users park and charge their vehicle while performing an activity such as shopping or dining, in many cases this is not  the case, and charging is the sole purpose of the users for coming to the station. Here, we would need to make $\mu$ a function of the quantiles $Q^{i}_{od}$ in order to appropriately model the mapping between the distribution of service times, which varies for each specific station, and its average wait time. We discuss  the implications of adopting a model that can capture this important connection in this section.


Let us assume that $1/\mu$ corresponds to the average charge amount (given the constant rate of charge),  and $c$ denotes the station's capacity, e.g., number of chargers times their rate of charge. The quantity $\rho = \lambda/\mu$  is the so called utilization ratio of a charging station, and is essentially equivalent to the expected charging demand of the station. 
Hence, as a first step, we will derive an analytical mapping for the expected electricity consumption of charging stations (a surrogate for $\rho$ here), as a function of the quantile variables $Q^{i}_{od}$.

\begin{lemma}\label{Lemma.elec.cost}
     The expected electricity consumption at  charging station $j$, as a function of the quantile variables $Q^{i}_{od}$ and denoted as $U_j(Q)$, is computed as follows: 
     \begin{align}
     \label{averageload}
    U_j(Q) = \sum_{(o,d) \in \mathcal O} q_{od} \sum_{i=1}^{K_{od}} \delta_{ji}\bigg [E(\frac{Q_{od}^{i}}{q_{od}}) - E(\frac{Q_{od}^{i-1}}{q_{od}}) \bigg].
     \end{align}
\end{lemma}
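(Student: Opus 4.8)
The plan is to compute the expected electricity consumption at station $j$ directly from the threshold-based partition established in Theorem \ref{theorem: energy thresholds} and then reduce the resulting integrals to the function $E(\cdot)$ via a change of variables. First I would observe that, by the arrival-rate identity \eqref{eq:flow1} together with the threshold structure, the users who charge at station $j$ are precisely those whose chosen path $p_i$ satisfies $\delta_{ji}=1$; moreover, by Theorem \ref{theorem: energy thresholds}, the users of o-d pair $(o,d)$ travelling on path $p_i$ are exactly those with energy request $\epsilon \in [\pi^{i-1}_{od},\pi^{i}_{od}]$. Since each such user demands $\epsilon$ units of energy and the energy requests are i.i.d.\ with density $g_{\mathcal{E}}$, the expected consumption rate contributed by path $p_i$ of pair $(o,d)$ is $q_{od}\int_{\pi^{i-1}_{od}}^{\pi^{i}_{od}}\epsilon\, g_{\mathcal{E}}(\epsilon)\,d\epsilon$.

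The key step is to rewrite this integral in terms of $E$. Using the substitution $t = G(\epsilon)$, so that $dt = g_{\mathcal{E}}(\epsilon)\,d\epsilon$ and $\epsilon = G^{-1}(t)$, the integral becomes $\int_{G(\pi^{i-1}_{od})}^{G(\pi^{i}_{od})} G^{-1}(t)\,dt$, which by the definition $E(x)=\int_0^x G^{-1}(t)\,dt$ equals $E\big(G(\pi^{i}_{od})\big) - E\big(G(\pi^{i-1}_{od})\big)$. Invoking the identity $\pi^{i}_{od}=G^{-1}(Q^{i}_{od}/q_{od})$ established in Section \ref{subsection: threshold wardrop} (equivalently $G(\pi^{i}_{od})=Q^{i}_{od}/q_{od}$), the per-path contribution reduces to $q_{od}\big[E(Q^{i}_{od}/q_{od}) - E(Q^{i-1}_{od}/q_{od})\big]$.

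Finally I would sum the per-path contributions over all paths entering station $j$ (weighting by $\delta_{ji}$) and over all o-d pairs, which yields the claimed expression for $U_j(Q)$. The main obstacle is purely a matter of bookkeeping: one must verify that the substitution $t=G(\epsilon)$ is legitimate on each sub-interval, which is guaranteed because $G$ is strictly increasing on the support $[\epsilon_{\min},\epsilon_{\max}]$ (the density is bounded below away from zero there, as assumed in Section \ref{subsection: CNO}), so $G^{-1}$ is well-defined and the endpoint identity $G(\pi^{i}_{od})=Q^{i}_{od}/q_{od}$ holds exactly. The remainder is a direct application of the fundamental theorem of calculus.
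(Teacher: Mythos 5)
Your proposal is correct and follows essentially the same route as the paper: decompose the station's expected consumption over the contributing paths, write each path's contribution as $q_{od}\int_{\pi^{i-1}_{od}}^{\pi^{i}_{od}}\epsilon\,g_{\mathcal{E}}(\epsilon)\,d\epsilon$, and apply the change of variables $t=G(\epsilon)$ together with $G(\pi^{i}_{od})=Q^{i}_{od}/q_{od}$ to land on $E(\cdot)$ differences. The only difference is cosmetic: the paper justifies the ``rate times conditional mean'' decomposition via Poisson thinning and Wald's equation and performs the substitution in the inverse direction (requiring the inverse-function derivative rule), whereas you use linearity of expectation and the forward substitution, which is slightly cleaner but mathematically equivalent.
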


\begin{proof} Note that since the average demand $q_{od}$ for each type of EV user is the mean of a Poisson process, it follows from Theorem \ref{theorem: energy thresholds} that the mean arrival rate at any charging station $j \in J$ also forms a Poisson process with rate $\sum_{(o,d) \in \mathcal O} q_{od}  \sum_{i = 1}^{K_{od}} \delta_{ij} \mathbb{P}[G^{-1}(\frac{Q_{od}^{i-1}}{q_{od}})\leq x\leq G^{-1}(\frac{Q_{od}^{i}}{q_{od}})]$ \footnote{ 
When the system is at equilibrium, we are given the thresholds $\pi_{od}^{i}$, and the flow of users from  each OD pair that will choose charging station $i$ (located on path $p_i$) is determined by sampling the arrival Poisson process with constant probability equal to $\int_{\pi_{od}^{i-1}}^{\pi_{od}^{i}} g_{\randE}(\epsilon) d\epsilon $. The sampling of a Poisson process will lead to another Poisson process, which determines a part of the traffic to station $i$. The arrival process of the charging station is then the sum of  independent arrival processes from different OD pairs (conditioned on the thresholds, the users path decisions are independent at equilibrium),  and the sum of independent Poisson processes is also a Poisson process.
}.
Hence, we can write that the average electricity consumption in each charging station $j$ by using the linearity  of expectation and using Wald's equation as follows:
\begin{align}
   & \sum_{(o,d) \in \mathcal O} q_{od}  \bigg[\sum_{i = 1}^{K_{od}} \delta_{ij} \mathbb{P}[G^{-1}(\frac{Q_{od}^{i-1}}{q_{od}})\leq x\leq G^{-1}(\frac{Q_{od}^{i}}{q_{od}})] \label{} \nonumber\\&~~~~~~~\bigg(\int x g\big(x | G^{-1}(\frac{Q_{od}^{i-1}}{q_{od}})\leq x \leq G^{-1}(\frac{Q_{od}^{i}}{q_{od}})\big) dx \bigg) \bigg] \\& \quad   = \sum_{(o,d) \in \mathcal O} q_{od} \sum_{i =1}^{K_{od}} \delta_{ij} \int_{G^{-1}(\frac{Q_{od}^{^{i-1}}}{q_{od}})}^{G^{-1}(\frac{Q_{od}^{^{i}}}{q_{od}})} x g(x) dx, 
\end{align} Then we use the substitutions $l = G(x)$ or $x = G^{-1}(l)$, we will have: 
\begin{align}
   \sum_{(o,d) \in \mathcal O} q_{od} \sum_{i =1}^{K_{od}} \delta_{ij} \int_{\frac{Q_{od}^{i-1}}{q_{od}}}^{\frac{Q_{od}^{i}}{q_{od}}} G^{-1}(l) G^{-1'}(l) g(G^{-1}(l)) dl,
\end{align}and using the derivative of inverse function rule, this yields:
\begin{align}\label{conv.So.2}
    \sum_{(o,d) \in \mathcal O} q_{od} \sum_{i = 1}^{K_{od}} \delta_{ij} \int_{\frac{Q_{od}^{i-1}}{q_{od}}}^{\frac{Q_{od}^{i}}{q_{od}}} G^{-1} (l) dl.
\end{align} Therefore, the expected total electricity consumption in each charging station $j$, denoted by $U_j(Q)$, is given by: \begin{align} \label{averageload2}
   U_j(Q) = \sum_{(o,d) \in \mathcal O} q_{od} \sum_{i = 1}^{K_{od}} \delta_{ij} \bigg[ E(\frac{Q_{od}^{i}}{q_{od}}) - E(\frac{Q_{od}^{i-1}}{q_{od}}) \bigg].
\end{align}\end{proof}

In Lemma \ref{Lemma.elec.cost}, the term $\delta_{ji}[E(\frac{Q_{od}^{i}}{q_{od}}) - E(\frac{Q_{od}^{i-1}}{q_{od}})]$ corresponds to the expected energy request of an EV user in the $i$th partition for o-d pair $(o,d)$ that chooses to charge at station $j$. Summing over all o-d pairs and and all feasible paths, we get the total expected energy consumption at station $j$.

Equipped with this, we can now create a wait time resembling  the form presented in \eqref{queue} by modifying the wait times associated with finding a free EVSE at a charging station in \eqref{eq:costfunction2} with a function of the following form:
\begin{equation}\label{newwait}
    \mbox{expected wait time} = \frac{1}{c}\cdot T_j\left(\frac{1}{c}U_j(Q)\right)\epsilon,
\end{equation}
where $T_j(x) = x/(1-x)$ corresponds to the term $\frac{\lambda/c\mu}{1-(\lambda/c\mu)}$ in \eqref{queue}, and  $\epsilon$ replaces $1/\mu$ for each user. 

While we remove the details for brevity, we can show that the structure of the Wardrop equilibrium of the network given this new wait time formula is still threshold based, and can be characterized by replacing the term $\int_{0}^{\lambda_j} T_{j}(x)dx $ in \eqref{eq:UE.problem} with $\int_{0}^{\frac{1}{c}U_j(Q)} T_{j}(x) dx$. This is straightforward to show given the following lemma.
\begin{lemma}
     The derivative of $U_j(Q)$ with respect to the upper and lower quantiles $Q_{od}^{i-1}$ and $Q_{od}^{i}$ of path $i$ is respectively equal to $G^{-1}(\frac{Q_{od}^{i-1}}{q_{od}})$ and $G^{-1}(\frac{Q_{od}^{i}}{q_{od}})$, the upper and lower energy thresholds of path $i$.
\end{lemma}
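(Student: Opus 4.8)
The plan is to prove the claim by direct differentiation of the closed-form expression for $U_j(Q)$ established in Lemma \ref{Lemma.elec.cost}, relying entirely on the fundamental theorem of calculus applied to $E(x) = \int_0^x G^{-1}(t)\,dt$. First I would isolate the dependence of $U_j(Q)$ on the quantile $Q_{od}^i$. In the sum $U_j(Q) = \sum_{(o,d)} q_{od}\sum_k \delta_{kj}\big[E(Q_{od}^k/q_{od}) - E(Q_{od}^{k-1}/q_{od})\big]$, the variable $Q_{od}^i$ enters only as the upper limit of path $i$'s term and as the lower limit of path $(i+1)$'s term, so for a station $j$ lying on path $i$ the relevant contribution is the single summand $q_{od}\,\delta_{ij}\,\big[E(Q_{od}^i/q_{od}) - E(Q_{od}^{i-1}/q_{od})\big]$; an analogous isolation handles $Q_{od}^{i-1}$.

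The key steps are then purely mechanical. By the fundamental theorem of calculus, $E'(x) = G^{-1}(x)$, and the chain rule gives $\frac{\partial}{\partial Q_{od}^i} E\!\left(\frac{Q_{od}^i}{q_{od}}\right) = \frac{1}{q_{od}} G^{-1}\!\left(\frac{Q_{od}^i}{q_{od}}\right)$. The decisive observation is that the outer prefactor $q_{od}$ exactly cancels this $1/q_{od}$, leaving $\frac{\partial U_j}{\partial Q_{od}^i} = G^{-1}(Q_{od}^i/q_{od})$ for the station $j$ on path $i$; the same computation applied to the lower limit yields $G^{-1}(Q_{od}^{i-1}/q_{od})$. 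Finally I would invoke the reformulation $\pi_{od}^i = G^{-1}(Q_{od}^i/q_{od})$ from Theorem \ref{theorem: energy thresholds} to identify these derivatives with the energy thresholds bounding path $i$, which completes the claim.

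The only genuine subtlety---and hence the main obstacle---is justifying that the fundamental theorem of calculus applies, i.e., that $E$ is differentiable with $E' = G^{-1}$. This hinges on $G^{-1}$ being continuous, which is not automatic for a general distribution since $G^{-1}$ is defined as a supremum. Here it follows from the standing assumption that the density $g_{\mathcal{E}}$ is bounded below away from zero on its compact support $[\epsilon_{\min}, \epsilon_{\max}]$: this makes $G$ strictly increasing and continuous, so its inverse is a genuine continuous function on $[0,1]$ and $E(x) = \int_0^x G^{-1}(t)\,dt$ is continuously differentiable with derivative $G^{-1}(x)$. Beyond this, I would only need to track the indicator $\delta_{ij}$, which selects the paths through station $j$, and the sign arising when a quantile appears as a lower limit rather than an upper limit; these are routine bookkeeping rather than a real difficulty.
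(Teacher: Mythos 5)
Your computation is correct and is exactly the argument the paper leaves implicit: the paper states this lemma without proof (calling it straightforward), and the same differentiation $\frac{\partial}{\partial Q_{od}^{i}} E\!\left(\frac{Q_{od}^{i}}{q_{od}}\right) = \frac{1}{q_{od}}G^{-1}\!\left(\frac{Q_{od}^{i}}{q_{od}}\right)$ with the $q_{od}$ prefactor cancelling is what the authors use in deriving $\frac{\partial L_{eq}}{\partial Q_{od}^{i}}$ in the proof of Theorem \ref{thr.UE}. Your extra care about why $E' = G^{-1}$ holds (via $g_{\mathcal{E}}$ bounded away from zero on its compact support) and your note on the sign when $Q_{od}^{i-1}$ appears as a lower limit are both appropriate and, if anything, more precise than the paper's bare statement.
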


However, while this all seems positive, the function $U_j(Q)$ is generally not convex in $Q$,  leading to the following result.
\begin{proposition}
The traffic flow pattern in a threshold-based EV equilibrium may not be unique in path flows when adopting wait times of the form given in \eqref{newwait}.
\end{proposition}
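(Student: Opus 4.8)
The plan is to prove non-uniqueness by exhibiting an explicit counterexample, since the statement only asserts that uniqueness \emph{may} fail. First I would revisit why uniqueness held in Theorem \ref{uniq.th.eq.}: the entire argument rested on the strict convexity of the TCAP objective \eqref{UEopt}, and that strict convexity reduced (via the Hessian step in the proof of Theorem \ref{uniq.th.eq.}) to the convexity of the congestion term $\sum_{j}\int_0^{\lambda_j}T_j(x)\,dx$ together with the electricity term. Replacing the wait cost by \eqref{newwait} replaces this congestion term by $\sum_j \int_0^{\frac1c U_j(Q)}T_j(x)\,dx$, i.e.\ by the composition of the increasing convex map $s\mapsto \int_0^s T_j(x)\,dx$ with $U_j(Q)$. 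The crux is that, unlike $\lambda_j$, which is linear in $f$, the quantity $U_j(Q)=\sum_{(o,d)}q_{od}\sum_i \delta_{ij}[E(Q_{od}^{i}/q_{od})-E(Q_{od}^{i-1}/q_{od})]$ is a difference of the convex functions $E(\cdot)$ and is therefore not convex in $Q$; hence the composition, and with it the whole objective, need not be convex and the uniqueness proof collapses.

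Since non-convexity of the objective does not by itself force multiple global minimizers, the second and decisive step is to build a concrete instance with two distinct equilibria. I would take the smallest nontrivial network: a single o--d pair with two parallel feasible paths passing through two distinct stations $1$ and $2$ with identical capacities $c$, identical road latencies, and a symmetric price pair, together with an energy demand $\mathcal{E}$ uniform on $[\epsilon_{\min},\epsilon_{\max}]$ so that $G^{-1}$ and $E$ are explicit. By symmetry the balanced threshold (equal split of mass) is always an equilibrium. I would then show that, for a suitable choice of $T_j$ (e.g.\ the queueing form $T_j(x)=x/(1-x)$ used in \eqref{queue}) and prices, this symmetric equilibrium is a strict saddle of the modified objective: the second-variation quadratic form, computed exactly as in the Hessian step of the proof of Theorem \ref{uniq.th.eq.} but now carrying the extra $U_j$-dependence, picks up a negative contribution from the concave $-E(Q^{i-1}/q)$ pieces. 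Consequently there exist two further, asymmetric equilibria in which the high-energy users concentrate at one station and the low-energy users at the other (and its mirror image). Verifying that these are genuine equilibria amounts to checking the equal-cost condition of Corollary \ref{corollary: equal cost}, reformulated through Lemma \ref{Lemma.UE} with the wait term now proportional to $\epsilon$.

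The main obstacle I anticipate is precisely this verification: I must exhibit \emph{distinct} flow patterns that each satisfy the stationarity/equal-cost conditions, rather than merely assert that the objective is non-convex. Concretely, the per-unit-energy cost at a station now contains the wait coefficient $\tfrac1c T_j\!\big(\tfrac1c U_j(Q)\big)$, which is itself increasing in the aggregate energy $U_j$ charged there; because high-energy users both contribute more to $U_j$ and are more sensitive to the $\epsilon$-scaled wait, this feedback is self-reinforcing and admits several self-consistent sortings. I would make this precise either by a closed-form computation of the two asymmetric thresholds in the uniform-demand, two-station instance followed by a direct check of \eqref{thr.UE1}, or, more robustly, by evaluating the Hessian quadratic form of the modified objective at the symmetric point and displaying a feasible direction $h$ (preserving the flow-conservation constraints \eqref{eq:UE.cons3}) along which it is strictly negative. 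The existence of such an $h$ certifies that the symmetric equilibrium is not a local minimizer, and combined with continuity of the objective over the compact feasible region this forces additional minimizers, establishing that the equilibrium is not unique.
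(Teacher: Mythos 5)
Your first paragraph is, in substance, the paper's entire argument: the paper justifies this proposition solely by observing that $U_j(Q)$ is built from differences $E(Q_{od}^{i}/q_{od})-E(Q_{od}^{i-1}/q_{od})$ of the convex function $E$ and is therefore generally not convex in $Q$, so the strict-convexity machinery behind Theorem~\ref{uniq.th.eq.} no longer applies once $\int_0^{\lambda_j}T_j(x)\,dx$ is replaced by $\int_0^{\frac{1}{c}U_j(Q)}T_j(x)\,dx$. You correctly identify this and, importantly, you also note the logical gap the paper glosses over: non-convexity of the potential function does not by itself produce two distinct equilibria, it only invalidates the uniqueness proof. Your proposed remedy --- a symmetric two-station instance in which the balanced split is a stationary point of the modified objective but a strict saddle, so that the global minimizer (which exists by compactness and is a KKT point of a linearly constrained problem, hence an equilibrium under the paper's modified characterization) must be a distinct, asymmetric equilibrium --- is a sound and strictly stronger route than the paper takes, though you leave the explicit computation of $T_j$, prices, and the negative-curvature direction $h$ unexecuted. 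One small correction to your closing step: in the non-convex setting equilibria correspond to stationary (KKT) points of the modified program, not to minimizers, so the conclusion you want is not that the saddle ``forces additional minimizers'' but that the symmetric equilibrium and the global minimizer are two distinct stationary points, hence two distinct equilibria; your own construction already delivers exactly this once phrased that way. Given that the proposition only claims uniqueness \emph{may} fail, either level of rigor is defensible, but your counterexample plan, if completed, would upgrade the paper's remark to a genuine proof.
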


\subsection{Proof of Proposition \ref{prop:socialoptfees}}
\begin{proof}
The proof is straightforward. By comparing the KKT conditions of \eqref{Soptocialy}   and \eqref{eq:UE.problem}    for used path $p_i$, we must have:
\begin{align} \label{fee.1}
   &\frac{\partial L_{eq}}{\partial f_{od}^{i}} = \frac{\partial L_{so}}{\partial f_{od}^{i}} ~ \Rightarrow ~
   \sum_{j \in J} \delta_{ji}\tau_j   =  \sum_{j \in J} \delta_{ji} \lambda_j T_{j}^{'}(\lambda_j)  \\&  \frac{\partial L_{eq}}{\partial Q_{od}^{i}} = \frac{\partial L_{so}}{\partial Q_{od}^{i}} ~ \Rightarrow ~ \nonumber\\&
   \alpha \sum_{j \in J}  \upsilon_j \frac{\partial U_j (Q)}{\partial Q_{od}^{i}} = 
   \alpha \sum_{j \in J}  \frac{ \partial D_j \bigg( U_j(Q)\bigg)}{\partial U_{j}(Q)}\bigg( \frac{\partial U_j (Q)}{\partial Q_{od}^{i}} \bigg)
\end{align}
Accordingly, to derive charging station prices that induce the system-optimal flow pattern as a UE, the CNO can simply  set  the plug-in fee ($\tau_j$) and the cost of electricity ($\upsilon_j$) at charging station $j$  as follows:
\begin{align}
    & \tau_j = \lambda_j T_{j}^{'}(\lambda_j),
     ~~\upsilon_j =  \frac{ \partial D_j \bigg( U_j(Q)\bigg)}{\partial U_j(Q)}. \label{toll.final2}
\end{align}\end{proof}

\end{document}